\newcommand{\calD}{\mathcal{D}}
\newcommand{\calF}{\mathcal{F}}
\newcommand{\calW}{\mathcal{W}}
\newcommand{\true}{\texttt{True}}
\newcommand{\false}{\texttt{False}}
\newcommand{\yes}{\textsc{Yes}}
\newtheorem{theorem}{Theorem}
\newtheorem{lemma}{Lemma}
\newtheorem{observation}{Observation}
\newtheorem{reduction rule}{Reduction Rule}[section]
\newtheorem{marking-scheme}{Marking Scheme}[section]
\newtheorem{definition}{Definition}[section]
\author[Krithika et al.]{R. Krithika\affiliationmark{1}\thanks{The author is supported by SERB MATRICS grant number MTR/2022/000306.} \and Roohani Sharma\affiliationmark{2} \and Prafullkumar Tale\affiliationmark{3}\thanks{The author is supported by the INSPIRE Faculty Fellowship awarded by DST Govt of India.}}  
\title{The Complexity of Contracting Bipartite Graphs into Small Cycles} %
\affiliation{
 Indian Institute of Technology Palakkad, Palakkad, India\\
 Discrete Mathematics Group, Institute of Basic Science, Daejeon, South Korea\\
 Indian Institute of Science Education and Research Pune, Pune, India 
 }
\keywords{bipartite graphs, contraction, small cycles}
\begin{document}
\publicationdata{vol. 27:3}{2025}{27}{10.46298/dmtcs.14658}{2024-11-01; 2024-11-01; 2025-10-15}{2025-11-23}

\maketitle
\begin{abstract}
For a positive integer $\ell \geq 3$, the $C_\ell$-\textsc{Contractibility} problem takes as input an undirected simple graph $G$ and determines whether $G$ can be transformed into a graph isomorphic to $C_\ell$ (the induced cycle on $\ell$ vertices) using only edge contractions. Brouwer and Veldman (1987) showed that $C_4$-\textsc{Contractibility} is \NP-complete\ in general graphs. It is easy to verify that $C_3$-\textsc{Contractibility} is polynomial-time solvable. Dabrowski and Paulusma (2017) showed that $C_{\ell}$-\textsc{Contractibility} is \NP-complete\ on bipartite graphs for $\ell = 6$ and posed as open problems the 
status of the problem when $\ell$ is 4 or 5. In this paper, we show that both $C_5$-\textsc{Contractibility} 
and $C_4$-\textsc{Contractibility} are \NP-complete\ on bipartite graphs. 
\end{abstract}

\section{Introduction}
Operations on graphs produce new graphs from existing ones. Elementary  editing operations include deleting vertices, deleting and/or adding edges, subdividing edges and contracting edges. Due to the ubiquitous presence of graphs in modelling real-world networks, many problems of practical importance may be posed as editing problems on graphs. In this work, we focus on modifying a graph by only performing edge contractions. Contracting an edge in a graph results in the addition of a new vertex adjacent to the neighbors of its endpoints followed by the deletion of the endpoints. As graphs typically represent binary relationships among a collection of objects, edge contractions naturally correspond to merging two objects into a single entity or to treating two objects as indistinguishable. Contractions can therefore be seen as a way of `simplifying' the graph and they have applications in clustering, compression, sparsification and computer graphics (\cite{AnderssonGL09,BernsteinDDKMS19,ChengDP04,CongL04,HarelK01,KarypisK98}). Edge contractions also play an important role in Hamiltonian graph theory, planar graph theory and graph minor theory (\cite{BrouwerV87,HoedeV81,RobertsonS95b}). 

For a collection $\mathcal{F}$ of graphs, the $\mathcal{F}$-\textsc{Contraction} problem takes a graph $G$ and an integer $k$ as input and decides if $G$ can be transformed into a graph that is isomorphic to a graph in $\mathcal{F}$ using at most $k$ edge contractions. Early works by \cite{WatanabeAN81,WatanabeAN83} and \cite{AsanoH83} showed that {\sc ${\cal F}$-Contraction} is \NP-complete\ even for well-structured graph classes $\mathcal{F}$ such as planar graphs, trees and chordal graphs. Asano and Hirata also showed that $\mathcal{F}$-\textsc{Contraction} remains \NP-complete\ on planar graphs where $\mathcal{F}$ is the set of trees, chordal graphs or outerplanar graphs. \cite{BrouwerV87} proved that \textsc{$\calF$-Contraction} is \NP-complete\ even when $\mathcal{F}$ is a singleton set consisting of a small graph like a cycle or a path on four vertices. This brings us to the \textsc{Graph Contractibility} problem~\cite[GT51]{GareyJ79}.

Given graphs $G$ and $H$, the \textsc{Graph Contractibility} problem decides whether $G$ can be transformed into a graph isomorphic to $H$ using only edge contractions. When $H$ is a fixed graph, the \textsc{Graph Contractibility} problem is called $H$-\textsc{Contractibility}. Intuitively, this problem of determining whether $G$ is contractible to $H$ may be seen as the task of determining if the `underlying structure' of $G$ is $H$. One of the related graph parameters in this context is cyclicity. The cyclicity of a graph is the largest integer $\ell$ for which the graph is contractible to the induced cycle on $\ell$ vertices (denoted as $C_\ell$). This parameter was introduced in the study of another important graph invariant called circularity by \cite{Blum82}. Ever since, there have been efforts towards understanding the complexity of computing cyclicity and expressing it in terms of some structural property of the graph. \cite{BrouwerV87} showed that $C_4$-\textsc{Contractibility} is \NP-complete, hence proving that determining cyclicity is \NP-hard\ in general. This result led to the study of the problem on special graph classes including bipartite graphs, claw-free graphs and planar graphs (\cite{DabrowskiP17-IPL,FialaKP13,Hammack99}). 

\cite{Hammack99} showed that the cyclicity of planar graphs can be computed in polynomial time and in another work \cite{Hammack02} showed that $C_{\ell}$-\textsc{Contractibility} is \NP-complete\ for every $\ell \ge 5$ in general. Later, \cite{KaminskiPT10} showed that $H$-\textsc{Contractibility} is polynomial-time solvable on planar graphs for any $H$. \cite{LevinPW08} showed that $H$-\textsc{Contractibility} is polynomial-time solvable on general graphs if $H$ is a graph on at most 5 vertices containing a universal vertex. 
However, \cite{HofKPST12} showed that the presence of a universal vertex in $H$ (on more than 5 vertices) does not guarantee that the $H$-\textsc{Contractibility} can be solved in polynomial time.
\cite{FialaKP13} showed that $C_{\ell}$-\textsc{Contractibility} is \NP-complete\ for claw-free graphs for every $\ell \ge 6$. \cite{HeggernesHLP14} proved that $P_\ell$-\textsc{Contractibility} is polynomial-time solvable on chordal graphs for every $\ell \geq 1$, where $P_{\ell}$ denotes the induced path on $\ell$ vertices. Later, \cite{BelmonteGHHKP14} proved that $H$-\textsc{Contractibility} is polynomial-time solvable on chordal graphs for every $H$. \cite{DabrowskiP17-IPL} showed that $C_6$-\textsc{Contractibility} is \NP-complete\ for bipartite graphs. 
It is easy to verify that that $C_3$-\textsc{Contractibility} is polynomial-time solvable in general graphs. 
In this paper, we show that both $C_5$-\textsc{Contractibility} and $C_4$-\textsc{Contractibility} are \NP-complete\ on bipartite graphs.

\begin{theorem}
\label{thm:c-5-contractiblity-np-hard}
$C_5$-\textsc{Contractibility} is \NP-complete\ on bipartite graphs.
\end{theorem}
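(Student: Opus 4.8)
Membership in \NP\ is immediate: a certificate is a partition of $V(G)$ into five nonempty classes $W_1,\dots,W_5$ --- a \emph{$C_5$-witness structure} --- such that every induced subgraph $G[W_i]$ is connected, some edge of $G$ joins $W_i$ and $W_{i+1}$ for each $i$ (indices modulo~$5$), and no edge joins $W_i$ and $W_j$ whenever $i$ and $j$ are non-consecutive. All of these conditions are checkable in polynomial time, so the work lies in establishing \NP-hardness.

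For hardness, the plan is to reduce from $3$-\textsc{Sat} --- a suitably restricted variant would serve equally well --- turning a formula $\phi$ into a bipartite graph $G_\phi$ with bipartition $(A,B)$. The organizing idea is a parity obstruction special to odd cycles: since $G_\phi$ is bipartite while $C_5$ is not, in \emph{every} $C_5$-witness structure of $G_\phi$ at least one class must be \emph{mixed}, i.e.\ meet both $A$ and $B$; otherwise the map sending each class to its (unique) side would be a proper $2$-coloring of $C_5$. Accordingly, $G_\phi$ will contain a rigid \emph{frame gadget} engineered so that, in any $C_5$-witness structure, the classes $W_2,W_3,W_4,W_5$ are forced onto prescribed monochromatic vertex sets and the cyclic order is fixed, so that $W_1$ is the unique mixed class and must absorb every vertex of $G_\phi$ lying outside the frame; the only non-adjacency constraints then left to respect are $W_1\not\sim W_3$ and $W_1\not\sim W_4$.

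The rest of $G_\phi$ encodes $\phi$. For each variable $x$ I would attach a small \emph{variable gadget} whose vertices can be joined to $W_1$ in exactly two connected ways, to be read as $x=\true$ or $x=\false$. For each clause $C$ I would attach a \emph{clause gadget} linked to the variable gadgets of the literals of $C$ and designed so that it can be routed entirely into $W_1$ --- keeping $W_1$ connected and disjoint from $W_3$ and $W_4$ --- if and only if at least one literal of $C$ is satisfied by the chosen shapes; if $C$ is falsified, every legal placement of its gadget would either disconnect $W_1$ or create an edge between $W_1$ and $W_3$ or $W_4$. Since edges are only ever added between $A$ and $B$, $G_\phi$ is bipartite, and it has size polynomial in $|\phi|$. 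Correctness is then routine in both directions: a satisfying assignment yields a $C_5$-witness structure by filling $W_2,\dots,W_5$ from the frame and routing each gadget into $W_1$ according to the assignment; conversely, any $C_5$-witness structure must, by rigidity of the frame, realize $W_2,\dots,W_5$ as intended and hence place all gadgets inside $W_1$, and the two shapes available to each variable gadget then define an assignment which, because $W_1\not\sim W_3$ and $W_1\not\sim W_4$, satisfies every clause.

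The hard part will be the design of the frame gadget: it must be bipartite, rigid enough to pin four of the five classes and fix the cyclic order, and yet leave $W_1$ with exactly the flexibility needed to carry all of the variable and clause gadgets --- no more and no less. Once a suitable frame is in hand, the variable gadgets are straightforward; the delicate case analysis will be concentrated in the clause gadget, where one must verify that \emph{every} placement of a falsified clause's gadget is illegal while \emph{some} placement of a satisfied one is legal, all without ever breaking bipartiteness. I would expect a related construction --- necessarily different, since $C_4$ is bipartite and the parity obstruction vanishes --- to settle $C_4$-\textsc{Contractibility} on bipartite graphs.
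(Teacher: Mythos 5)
There is a genuine gap, and it is structural rather than a matter of missing details. Your plan forces \emph{every} vertex outside the frame into the single class $W_1$, with $W_2,\dots,W_5$ pinned to prescribed subsets of the frame. But a $C_5$-witness structure is just a partition of $V(G_\phi)$: once all non-frame vertices are declared to lie in $W_1$, the candidate partition is completely determined by the frame, and whether it is a valid witness structure (is $G_\phi[W_1]$ connected? is $E(W_1,W_3)=E(W_1,W_4)=\emptyset$?) is a fixed, polynomial-time-checkable property of $G_\phi$ with no residual combinatorial choice. In particular, the ``two connected ways'' of a variable gadget cannot be read off from the witness structure --- two different ways of connecting the same set of vertices inside $W_1$ yield the \emph{same} partition --- so no assignment can be extracted in the reverse direction, and the claim that the gadget can be ``routed into $W_1$ according to the assignment'' has no content at the level of partitions. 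This is why the proposal is internally inconsistent: either $W_1$ absorbs everything (no choice, hence no encoding of SAT) or the variable gadgets are split among several classes in more than one way (contradicting the rigidity you postulate). Any correct reduction must leave genuine freedom in how the gadgets are distributed \emph{across} the witness sets.

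That is exactly what the paper does, and it is a genuinely different design from yours: the base $5$-cycle $V_\alpha$ is forced to meet all five witness sets (one vertex per set), each variable gadget is itself a $5$-cycle attached to $V_\alpha$ by two interleaved perfect ``ladders,'' so that it too must meet all five witness sets in one of exactly two rotational alignments --- this rotation is the \true/\false\ choice --- and the clause vertices $c_j$ and $b_j$ are forced into the classes of $\alpha^1$ and $\alpha^3$ respectively, where connectivity of those classes demands a neighboring variable vertex in the right rotation. Since $c_j$ needs a \emph{true} variable and $b_j$ needs a \emph{false} one, the natural source problem is \textsc{Positive NAE-SAT}, not $3$-\textsc{Sat}; your choice of $3$-\textsc{Sat} also leaves open how negative literals would be encoded by contractions, which is a known difficulty and the reason prior hardness proofs for $C_\ell$-\textsc{Contractibility} start from NAE-SAT. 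Finally, the paper does not fight bipartiteness with a parity-based frame at all: it first builds a convenient non-bipartite graph $H$ and then subdivides precisely those edges whose endpoints land on the same side, proving separately that these subdivisions preserve $C_5$-contractibility because each subdivided edge has its endpoints at distance two in the new graph. Your parity observation (some class must be mixed) is correct but does not by itself yield a workable frame; the load-bearing ideas you are missing are the two-rotation variable gadget and the ``at least one true and one false per clause'' mechanism.
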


\begin{theorem}
\label{thm:c-4-contractiblity-np-hard}
$C_4$-\textsc{Contractibility} is \NP-complete\ on bipartite graphs.
\end{theorem}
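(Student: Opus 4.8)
Membership in \NP\ is immediate: guess a partition $(W_1,W_2,W_3,W_4)$ of $V(G)$ and verify in polynomial time that each $G[W_i]$ is connected, that there is an edge between $W_i$ and $W_{i+1}$ for every $i$ (indices taken cyclically), and that there is no edge between $W_1$ and $W_3$ and none between $W_2$ and $W_4$; such a partition exists exactly when $G$ contracts to $C_4$. I would work throughout with this \emph{$C_4$-witness structure}, and it is convenient to record the reformulation obtained by setting $X = W_1 \cup W_3$ and $Y = W_2 \cup W_4$: a graph $G$ is $C_4$-contractible iff $V(G)$ partitions into $X$ and $Y$ so that $G[X]$ has exactly two components, $G[Y]$ has exactly two components, and each of the two components of $G[X]$ is adjacent to each of the two components of $G[Y]$. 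For hardness I would reduce from a conveniently restricted variant of \textsc{Satisfiability} --- for instance \textsc{3-SAT} with each variable occurring a bounded number of times, chosen so that the resulting instance graph is bipartite --- producing a bipartite graph $G_\varphi$ that is $C_4$-contractible precisely when $\varphi$ is satisfiable.

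\textbf{The construction.} The plan is to assemble $G_\varphi$ from a small rigid \emph{frame} together with \emph{variable} and \emph{clause} gadgets, all bipartite. The frame provides four distinguished vertices $f_1,f_2,f_3,f_4$ together with short bipartite connector paths, engineered so that in any $C_4$-witness structure $f_i$ is forced into bag $W_i$; since a bipartite graph has no large clique, the forcing must be driven by the non-adjacency constraints (opposite bags share no edge) and by connectivity rather than by a dense substructure. Each variable $x$ receives a bipartite gadget (a path- or cycle-like component) that is constrained to lie inside $X = W_1 \cup W_3$ and that can be hooked either to the $W_1$-side or to the $W_3$-side of the frame, this binary choice recording the truth value of $x$; a clean assignment then makes $W_1$ the union of the ``true'' variable gadgets with part of the frame, and $W_3$ the union of the ``false'' ones with the rest, each side connected and the two sides mutually non-adjacent. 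Each clause $c$ receives a gadget --- a vertex with a few connectors --- that must be absorbed into $Y = W_2 \cup W_4$, and its edges to the variable gadgets are routed so that a valid placement of the clause gadget (one keeping it connected within its bag and non-adjacent to the opposite bag of $Y$) is possible exactly when some literal of $c$ is set to satisfy $c$. All connector vertices are coloured consistently so that $G_\varphi$ stays bipartite, and the requirement that opposite bags be non-adjacent has to be compatible with this two-colouring, which is what dictates how the connectors may be wired.

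\textbf{Correctness and the main obstacle.} The direction ``$\varphi$ satisfiable $\Rightarrow G_\varphi$ contractible to $C_4$'' is the routine one: from a satisfying assignment one reads off $X$ and $Y$, places each variable gadget on the dictated side, distributes the clause gadgets between the two components of $Y$ using a satisfied literal for each clause, and checks that the four bags are connected, the four consecutive adjacencies are present, and no forbidden opposite-bag edge appears. The hard direction is ``$G_\varphi$ contractible $\Rightarrow \varphi$ satisfiable'', and the anticipated main obstacle is \emph{rigidity}: one must prove that \emph{every} $C_4$-witness structure of $G_\varphi$ --- not merely the intended ones --- respects the gadget decomposition, i.e.\ that the frame vertices occupy the four bags as designed, that each variable gadget lies wholly in $W_1$ or wholly in $W_3$, and that each clause gadget lies in $Y$ consistently; only then can the witness structure be decoded into a satisfying assignment. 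Establishing this requires a careful case analysis leaning on the three available handles --- bags are connected, opposite bags are non-adjacent, and $G_\varphi$ is bipartite --- and the delicate balance is to make the frame and connectors strong enough to force the intended shape while (i) keeping the whole construction bipartite and (ii) not inadvertently creating an unintended $C_4$-witness structure when $\varphi$ is unsatisfiable. The remaining points --- that all four bags end up nonempty, that $G_\varphi$ is connected, and that the reduction runs in polynomial time --- are easily checked once the rigidity lemma is in hand.
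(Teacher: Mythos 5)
Your proposal is a plan rather than a proof: it correctly identifies that the crux is a rigidity lemma for the frame, but it does not supply the mechanism that makes such a lemma provable in a bipartite graph, and it is precisely this mechanism that carries all the difficulty. In a bipartite graph you cannot use cliques or other dense substructures to pin vertices to bags, and ``non-adjacency plus connectivity'' alone will not force four designated vertices into four distinct bags. The paper's device is concrete: for every pair $\{u,v\}$ of vertices that must not land in opposite bags, it adds \emph{three} pairwise non-adjacent common neighbours of $u$ and $v$ (the sets $D$ and $D'$). If $u$ and $v$ were in opposite bags $W^i$ and $W^{(i+2) \bmod 4}$, each of these three vertices would have to lie in $W^{(i+1)\bmod 4}$ or $W^{(i+3)\bmod 4}$, so two of them would share a bag in which they have no neighbour, contradicting connectivity of witness sets. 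This one gadget, applied to the pairs $\{t,f\}$, $\{t',f'\}$, $\{t,c_j\}$, $\{t,b_j\}$, $\{f',c'_j\}$, $\{f',b'_j\}$, together with auxiliary vertices $y_i,z_i,y'_i,z'_i$ needed for one stubborn case, is what yields the rigidity statement (every pair among $t,t',f,f'$ lies in a different bag). Without specifying something of this kind, the step you label ``requires a careful case analysis'' has no basis to succeed.

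The second gap is the choice of source problem. You propose plain \textsc{3-SAT}, but the symmetry of a $C_4$-witness structure works against this: both $W^1$ (containing $t'$ and the primed clause vertices) and $W^3$ (containing $f$ and the unprimed clause vertices) must be connected, and each clause vertex can only reach its bag through variable vertices of the corresponding polarity. The backward direction therefore extracts, for every clause, both a variable placed on the ``true'' side and a variable placed on the ``false'' side --- a not-all-equal condition, not mere satisfiability. This is why the paper reduces from \textsc{Positive NAE-SAT}; with ordinary \textsc{3-SAT} your forward direction would also break, since a clause all of whose literals are true would leave one of its two clause vertices with no anchor in its intended bag. Either you must redesign the gadgets to break this symmetry (nontrivial), or you should switch the source problem as the paper does.
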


Theorems \ref{thm:c-5-contractiblity-np-hard} and \ref{thm:c-4-contractiblity-np-hard} involve reductions from the \textsc{Positive Not All Equal SAT (Positive NAE-SAT)} problem where given a formula $\psi$ in conjunctive normal form with no negative literals, the objective is to determine if there is an assignment of \true\ or \false\ to each of the variables such that for each clause at least one but not all variables in it are set to \true. Such an assignment is called a {\em not-all-equal satisfying assignment}. \textsc{Positive NAE-SAT} (also referred to as \textsc{Monotone NAE-SAT}) was shown to be \NP-complete\ by \cite{Schaefer78}. Also, a straight-forward reduction from \textsc{Set Splitting} or \textsc{Hypergraph $2$-Colorability} (\cite[SP4]{GareyJ79}) to \textsc{Positive NAE-SAT} ascertains this fact. \\

\section{Preliminaries}
\label{sec:preliminaries}
For a positive integer $q$, $[q]$ denotes the set $\{1, 2, \dots, q\}$. $\mathbb{N}$ denotes the collection of all non-negative integers. A partition of a set $S$ is a set of disjoint subsets of $S$ whose union is $S$. 

For standard graph-theoretic terminology not stated here, we refer the reader to the book by  \cite{Diestel12}. In this work, we only consider simple undirected graphs. Unless otherwise specified, we use $n$ to denote the number of vertices in the graph under consideration $G$. For an undirected graph $G$, its sets of vertices and edges, are denoted by $V(G)$ and $E(G)$, respectively. An edge between vertices $u$ and $v$ is denoted as $uv$. Two vertices $u, v$ in $V(G)$ are \emph{adjacent} if there is an edge $uv$ in $G$. The \emph{open neighborhood} of a vertex $v$, denoted by $N_G(v)$, is the set of vertices adjacent to $v$ and the \emph{closed neighborhood} of $v$, denoted by $N_G[v]$, is $N_G(v) \cup \{v\}$. A vertex $u$ is a \emph{pendant vertex} if $|N_G(v)| = 1$. The notion of neighborhood is extended to a set $S \subseteq V(G)$ of vertices by defining $N_G[S]$ as $\bigcup_{v \in S} N[v]$ and $N_G(S)$ as $N[S] \setminus S$. We omit the subscript in the notation for neighborhood if the graph under consideration is clear. 

A set $S  \subseteq V(G)$ of vertices is a \emph{dominating set} if $V(G) = N[S]$.  For a subset $F$ of edges, $V(F)$ denotes the set of endpoints of edges in $F$. For a subset $S$ of $V(G)$ ({resp.} a subset $F$ of $E(G)$), $G - S$ ({resp.} by $G - F$) denotes the graph obtained by deleting $S$ ({resp.} deleting $F$) from $G$. The subgraph of $G$ induced on the set $S \subseteq V(G)$ is denoted by $G[S]$. For two subsets $S_1, S_2$ of $V(G)$, $E(S_1, S_2)$ denotes the set of edges with one endpoint in $S_1$ and the other endpoint in $S_2$. With a slight abuse of notation, we use $E(S)$ to denote $E(S, S)$. We say thats the sets $S_1, S_2$ are adjacent if $E(S_1, S_2) \neq \emptyset$. 

A {\em path} $P$ in $G$ is a sequence $(v_1,\dots,v_k)$ of distinct vertices such that for each $i \in [k-1]$, $v_iv_{i+1} \in E(G)$. A {\em cycle} $C$ in $G$ is a sequence $(v_1,\dots,v_k)$ of distinct vertices such that $(v_1,\dots,v_k)$ is a path and $v_kv_1 \in E(G)$. A cycle $C=(v_1,\dots,v_k)$ is called an {\em induced} (or {\em chordless}) cycle if there is no edge in $G$ that is between two non-consecutive vertices of $C$ with the exception of the edge $v_kv_1$. The length of a path or cycle $X$ is the number of vertices in it and is denoted by $|X|$. An induced cycle of length $q$ is called a $q$-cycle and denoted by $C_q$. The \emph{distance} between any two vertices $u, v$ in $V(G)$ is the length of a shortest path from $u$ to $v$ in $G$. The \emph{diameter} of $G$ is the maximum length of a shortest path between two vertices in $G$. A graph is {\em connected} if there is a path between every pair of distinct vertices. A subset $S$ of $V(G)$ is said to be a \emph{connected set} if $G[S]$ is connected. A \emph{spanning tree} of a connected graph is a connected acyclic subgraph which includes all the vertices of the graph. A \emph{spanning forest} of a disconnected graph is a collection of spanning trees of its components. 

A set of vertices $Y$ is said to be an \emph{independent set} if no two vertices in $Y$ are adjacent. A graph $G$ is a \emph{bipartite graph} if its vertex set can be partitioned into two sets $X$ and $Y$ such that every edge in the graph has one endpoint in $X$ and the other endpoint in $Y$. Such a partition $\{X, Y\}$ of a bipartite graph is called a bipartition. The \emph{subdivision} of the edge $uv$ in $G$ results in another graph that is obtained from $G$ by deleting the edge $uv$ and adding a new vertex $w$ adjacent to $u$ and $v$. Observe that subdividing all edges of an arbitrary graph results in a bipartite graph. A complete bipartite graph with bipartition $\{X, Y\}$ is a bipartite graph where every vertex of $X$ is adjacent to every vertex of $Y$.

The {\em contraction} of an edge $e=uv$ in $G$ results in another graph denoted by $G/e$ that is obtained from $G$ by deleting vertices $u$ and $v$ from $G$, and adding a new vertex which is adjacent to the vertices that are adjacent to either $u$ or $v$ in $G$. This process does not introduce self-loops or parallel edges. Formally $G/e$ is defined as $V(G/e) = (V(G) \cup \{w\}) \backslash \{u, v\}$ and $E(G/e) = \{xy \mid x,y \in V(G) \setminus \{u, v\}, xy \in E(G)\} \cup \{wx \mid x \in N_G(u) \cup N_G(v)\}$ where $w$ is a new vertex. Observe that contracting an edge reduces the number of vertices in the graph by exactly one and reduces the number of edges by at least one. For a subset $F$ of edges in $G$, $G/ F$ denotes the graph obtained from $G$ by contracting all the edges (in some order) in $F$.  
We now formally define the notion of graph contractibility.

\begin{definition}\label{def:graph-contraction} $G$ is said to be contractible to $H$ if there is a surjective function $\psi: V(G) \rightarrow V(H)$ such that the following properties hold.
\begin{enumerate}
\item For each $h \in V(H)$, $\psi^{-1}(h)$, called the \emph{witness set} corresponding to $h$, is connected. 
\item For each $h, h’ \in V(H)$, $hh’ \in E(H)$ if and only if $E(\psi^{-1}(h), \psi^{-1}(h'))\neq \emptyset$.
\end{enumerate}
Then, we say that $G$ is \emph{contractible} to $H$ via the function $\psi$ and that $G$ has a $H$-\emph{witness structure} $\mathcal{W}=\{\psi^{-1}(h) \mid h \in V(H)\}$ which is the collection of all witness sets.
\end{definition}

In Definition \ref{def:graph-contraction}, a witness set that contains more than one vertex is called a \emph{big witness set} and the one that is a singleton set is called a \emph{small witness set} or \emph{singleton witness set}. Note that a witness structure $\mathcal{W}$ is a partition of $V(G)$. Also, if a vertex $v$ is in some big witness set $W$, then at least one neighbor of $v$ is also in $W$. Recall that the \textsc{$H$-Contractibility} problem takes as input a graph $G$ and decides whether $G$ is contractible to $H$ or not. Observe that this task is equivalent to determining if $G$ has a $H$-witness structure or not.

Now, we proceed to proving Theorems \ref{thm:c-5-contractiblity-np-hard} and \ref{thm:c-4-contractiblity-np-hard} in Sections \ref{sec:c5-contractiblity-bipartite} and \ref{sec:c4-contractiblity-bipartite}, respectively.

\section{C$_5$-Contractibility on Bipartite Graphs}
\label{sec:c5-contractiblity-bipartite}

In this section, we prove Theorem~\ref{thm:c-5-contractiblity-np-hard}. It is easy to verify that $C_5$-\textsc{Contractibility} is in \NP. Given an instance $\psi$ of \textsc{Positive NAE-SAT} with $N$ variables and $M$ clauses, we give a polynomial-time algorithm that outputs a bipartite graph $G$ equivalent to $\psi$. For the sake of simplicity, we describe the algorithm in two steps. In the first step, the algorithm constructs a non-bipartite graph $H$ equivalent to  $\psi$ (Lemmas \ref{lemma:Pos-NAE-SAT-C5-Contractibility-Part-1-forward} and \ref{lemma:Pos-NAE-SAT-C5-Contractibility-Part-1-backward}) and then in the second step, the algorithm constructs a bipartite graph $G$ that is equivalent to $H$ (Lemma \ref{lemma:Pos-NAE-SAT-C5-Contractibility-Part-2-forward-backward}). We remark that $G$ is obtained from $H$ by dividing some (and not all) of the edges of $H$. 

\subsection{Construction of $H$ and $G$}
Let $\{X_1, X_2, \dots, X_N\}$ and $\{C_1, C_2, \dots, C_M\}$ be the sets of variables and clauses, respectively, in $\psi$. The non-bipartite graph $H$ is constructed as follows. Refer to Figure~\ref{fig:c5-contractiblity-bipartite-np-hard} for an illustration. 
\begin{enumerate}
\item Add a set $V_\alpha=\{\alpha^0, \alpha^1, \alpha^2, \alpha^3, \alpha^4\}$ of five vertices that induce the 5-cycle $(\alpha^0, \alpha^1, \alpha^2, \alpha^3, \alpha^4)$. This set forms the ``base cycle'' in the witness structure. 
\item For every $i \in [N]$, add a set of five vertices that induce a 5-cycle $C^i=(x^0_i, x^1_i, x^2_i, x^3_i, x^4_i)$ and two sets of edges $\{x^0_i\alpha^0,$ $x^1_i\alpha^1, x^2_i\alpha^2, x^3_i\alpha^3, x^4_i\alpha^4\}$ and $\{x^0_i\alpha^1, x^1_i\alpha^2, x^2_i\alpha^3, x^3_i\alpha^4, x^4_i\alpha^0\}$. The variable gadget is designed so that there are two choices for $C^i$ to co-exist (in a $C_5$-witness structure) with the $C_5$ induced by $V_\alpha$. We will associate these two choices with a \true\ or \false\ assignment to the corresponding variable.
\item For every $j \in [M]$, add vertices $c_j$ and $b_j$ and a set $\{c_j\alpha^0, c_j\alpha^2, b_j\alpha^2, b_j\alpha^4\}$ of edges. The neighbours of $c_j$ and $b_j$ are defined so that $c_j$ will be in the same witness set as $\alpha^1$ (a non-neighbor of $c_j$) and $b_j$ will be in the same witness set as $\alpha^3$ (a non-neighbor of $b_j$).  
\item Finally, for every $i \in [N]$ and $j \in [M]$ such that $X_i$ appears in $C_j$, add edges $x^1_i c_j$ and $x^2_i b_j$. This step is the one that encodes the clause-variable relationship. Relevant variables are expected to help $c_j$ (and $b_j$) to be connected to witness sets containing $\alpha^1$ (and $\alpha^3$).   
\end{enumerate}
This completes the construction of $H$. 

For $p \in \{0, 1, 2, 3, 4\}$, define $X^p := \{x^p_i \mid i \in [N]\}$. Also, define $Y^c := \{c_j \mid j \in [M]\}$ and $Y^b := \{b_j \mid j \in [M]\}$. For an edge $uv \in E(H)$, let $\lambda(u, v)$ denote the new vertex added while subdividing $uv$ in the construction of $G$. Let $L=\{\alpha^0,\alpha^2,\alpha^4\} \cup X^1 \cup X^3$ and $R=\{\alpha^1,\alpha^3\} \cup X^0 \cup X^2 \cup X^4 \cup Y^c \cup Y^b$. Then, $\{L,R\}$ is a partition of $H$ into two parts where there are certain edges with both endpoints in the same part. We subdivide exactly these edges to obtain $G$.
\begin{enumerate}
\item \label{item:subdivide-x0-x4} Subdivide the edge $\alpha^0\alpha^4$.
\item For every $i \in [N]$, subdivide the edges $x^0_ix^4_i$, $x^0_i \alpha^1$, $x^1_i \alpha^2$, $x^2_i \alpha^3$, and $x^3_i \alpha^4$.
\item For every $i \in [N]$ and $j \in [M]$, subdivide the edge $x^2_i b_j$ if it exists.
\end{enumerate}
This completes the construction of $G$. 

We now argue that $G$ is a bipartite graph. Observe that $L$ and $R$ are independent sets in $G$. We will extend this partition $\{L,R\}$ of $H$ into a bipartition of $G$ as follows: $\lambda(\alpha^0, \alpha^4) \in R$ and for every $i \in [N]$,  $\lambda(x^0_i, x^4_i) \in L$, $\lambda(x^0_i,  \alpha^1) \in L$,  $\lambda(x^1_i,  \alpha^2) \in R$,  $\lambda(x^2_i, \alpha^3) \in L$ and $\lambda(x^3_i,  \alpha^4) \in R$. For every $i \in [N], j \in [M]$, if $x^2_i b_j \in E(H)$, then $\lambda(x^2_i, b_j) \in L$. See Figure~\ref{fig:c5-contractiblity-bipartite-np-hard} for an illustration. It is easy to verify that $\{L,R\}$ is a bipartition of $G$ and hence $G$ is a bipartite graph. 

We remark that the natural bipartite graph obtained from $H$ by subdividing all the edges may not be equivalent to $H$ in the context of \textsc{$C_5$-Contractiblity}. In Lemma~\ref{lemma:Pos-NAE-SAT-C5-Contractibility-Part-2-forward-backward}, we show that the set of edges of $H$ that are subdivided to obtain $G$ are  safe (in preserving contractiblity to $C_5$) to subdivide.

\begin{figure}[t]
  \begin{center}
    \includegraphics[scale=0.7]{./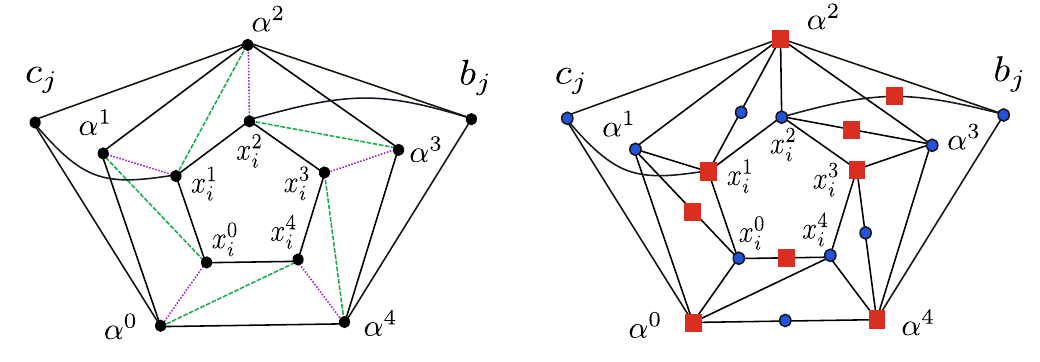}
    \end{center}
   \caption{(Left) The graph $H$ with certain edges highlighted as purple (dotted) edges denote setting variable $X_i$ to \true\ and as green (dashed) edges denote setting $X_i$ to \false, respectively.  (Right) The bipartite graph $G$ where blue (round) and red (squares) vertices denote a bipartition.\label{fig:c5-contractiblity-bipartite-np-hard}}
\end{figure}


\subsection{Equivalence of $H$ and $G$}

We show that $G$ and $H$ are equivalent in the context of \textsc{$C_5$-Contractiblity}. As $G$ is obtained from $H$ by subdividing some edges, one can obtain $H$ from $G$ by contracting some edges. Hence, if one can obtain a $C_5$ by contracting edges in $H$, then one can also obtain a $C_5$ by contracting edges in $G$ by first contracting $G$ to $H$ and then contracting $H$ to $C_5$. To prove the converse, we first argue that no vertex in $V(G) \setminus V(H)$ is a singleton witness set in any $C_5$-witness structure $\calW$ of $G$. Then, we show that deleting vertices of $V(G) \setminus V(H)$ from $\calW$ results in a $C_5$-witness structure $\calW'$ of $H$.

\begin{restatable}{lemma}{GequivH}
\label{lemma:Pos-NAE-SAT-C5-Contractibility-Part-2-forward-backward}
$H$ is a \yes-instance of \textsc{$C_5$-Contractibility} if and only if $G$ is a \yes-instance of \textsc{$C_5$-Contractibility}.
\end{restatable}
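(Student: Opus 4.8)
The forward direction is easy and the excerpt already sketches it: since $G$ is obtained from $H$ by subdividing a specified set of edges, contracting each subdivision edge (say $\lambda(u,v)v$) recovers $H$, so any sequence of contractions witnessing $H \leadsto C_5$ can be prefixed by these to witness $G \leadsto C_5$. The real work is the converse, and the plan is to take an arbitrary $C_5$-witness structure $\calW$ of $G$ and show that no subdivision vertex $\lambda(u,v) \in V(G)\setminus V(H)$ forms a singleton witness set, after which we can argue that collapsing each such $\lambda(u,v)$ into a neighbouring witness set yields a valid $C_5$-witness structure of $H$.

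For the ``no singleton'' claim I would argue by contradiction. Suppose some $\lambda(u,v)$ is a singleton witness set. A key structural fact is that $\lambda(u,v)$ has degree exactly $2$ in $G$ (its only neighbours are $u$ and $v$), so as a singleton witness set it must ``be'' a vertex of $C_5$ whose two $C_5$-neighbours are realized through the witness sets containing $u$ and $v$. Since $C_5$ has no degree-$1$ or degree-$3$ internal structure issues, this forces $u$ and $v$ to lie in \emph{distinct} witness sets, namely the two witness sets corresponding to the two $C_5$-neighbours of $\lambda(u,v)$'s image, and moreover these are the \emph{only} two witness sets adjacent to the one containing $\lambda(u,v)$. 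I would then go edge-type by edge-type through the list of subdivided edges (the edge $\alpha^0\alpha^4$; the cycle edges $x^0_i x^4_i$; the ``slanted'' edges $x^0_i\alpha^1, x^1_i\alpha^2, x^2_i\alpha^3, x^3_i\alpha^4$; and the clause edges $x^2_i b_j$) and derive a contradiction in each case — typically because the remaining graph $G - \lambda(u,v)$, or the induced adjacencies among the other witness sets, cannot be completed to a $C_5$ (e.g., a cycle would be destroyed so $G$ would need to be disconnected-like into the wrong shape, or a witness set would be forced to be non-adjacent to something it must touch). Here one uses that $C_5$ is $2$-regular and triangle-free, and that contracting to $C_5$ leaves $5$ witness sets with a very rigid adjacency pattern.

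Once singleton subdivision vertices are ruled out, each $\lambda(u,v)$ lies in some big witness set $W$; since $\lambda(u,v)$ has exactly the two neighbours $u,v$, and $W$ is connected, at least one of $u,v$ lies in $W$. Now I would define $\calW'$ on $V(H)$ by, for each $\lambda(u,v)$, removing it from its witness set $W$ and — if this disconnects $W$ or leaves $u \notin W$, $v \notin W$ — merging along the $u$–$v$ route appropriately; more cleanly, one shows directly that $W \setminus (V(G)\setminus V(H))$ is still connected in $H$ (because the only reason a path in $G[W]$ used $\lambda(u,v)$ is to go between $u$ and $v$, and $uv$ is an edge of $H$), and that the adjacency pattern between witness sets is unchanged (an adjacency in $G$ either already existed via an $H$-edge, or was routed through some $\lambda(u,v)$ with $u,v$ in the two relevant sets, which corresponds to the $H$-edge $uv$). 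This gives a surjection $V(H) \to V(C_5)$ with connected witness sets and the correct adjacencies, i.e. a $C_5$-witness structure of $H$, completing the proof.

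The main obstacle is the case analysis in the ``no singleton subdivision vertex'' step: one must handle each family of subdivided edges and show that a singleton there is incompatible with a global $C_5$-witness structure, carefully using the $2$-regularity of $C_5$ and the specific local neighbourhoods designed in the construction (in particular that $\alpha^0\alpha^4$ was subdivided, and which of the slanted edges and clause edges were subdivided versus not). Everything after that — transferring the witness structure from $G$ back to $H$ — is essentially bookkeeping about connectivity and adjacency preservation under deletion of degree-$2$ subdivision vertices.
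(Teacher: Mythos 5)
Your overall strategy is exactly the paper's: forward direction by un-subdividing, and for the converse, first rule out subdivision vertices as singleton witness sets and then delete them from the witness structure. The second half of your argument (connectivity of $W \setminus (V(G)\setminus V(H))$ because a path through $\lambda(u,v)$ can be rerouted along the $H$-edge $uv$, and preservation of the adjacency pattern) is correct and matches the paper. The gap is in the singleton-exclusion step, which you correctly identify as the crux but leave as an unspecified case analysis with only vaguely described contradictions (``a cycle would be destroyed'', ``cannot be completed to a $C_5$''). The argument that actually closes each case is a single uniform observation that your proposal never states: for \emph{every} edge $uv$ of $H$ that gets subdivided, the endpoints $u$ and $v$ have a common neighbor $w'$ in $G$ other than $\lambda(u,v)$ (namely $x^4_i$ for $\alpha^0\alpha^4$, $\alpha^0$ for $x^0_ix^4_i$, $x^{p+1}_i$ for $x^p_i\alpha^{p+1}$, and $\alpha^2$ for $x^2_ib_j$). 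If $W^0=\{\lambda(u,v)\}$ with $u\in W^1$ and $v\in W^4$, then $w'$ lies in a witness set that is equal or adjacent to both the set containing $u$ and the set containing $v$; every placement of $w'$ either forces $w'=\lambda(u,v)$ or produces an edge between the non-adjacent sets $W^1$ and $W^4$ --- equivalently, $u$ and $v$ are at distance two in $G-\lambda(u,v)$ when the witness structure requires distance at least three. This common-neighbor property is precisely why only these edges (and not all edges of $H$) are subdivided, a design point the paper emphasizes; without articulating it, your case analysis has no engine and the proof is incomplete as written.
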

\begin{proof}
 As $G$ is obtained from $H$ by subdividing some edges, one can obtain $H$ from $G$ by contracting some edges. Hence, if one can obtain a $C_5$ by contracting edges in $H$, then one can also obtain a $C_5$ by contracting edges in $G$ by first contracting $G$ to $H$ and then contracting $H$ to $C_5$. Therefore, if $H$ is a \yes-instance of \textsc{$C_5$-Contractibility} then $G$ is a \yes-instance of \textsc{$C_5$-Contractibility}.

Suppose $G$ is a \yes-instance of \textsc{$C_5$-Contractibility} and $\calW=\{W^i \mid i \in [4] \cup \{0\}\}$ is a $C_5$-witness structure of $G$ where $E(W^i, W^j) \neq \emptyset$ if and only if $j=(i+1) \mod 5$. We first argue that no vertex in $V(G) \setminus V(H)$ is a singleton witness set in $\calW$. Suppose for some $uv \in E(H)$, $\lambda(u, v)$ is a singleton witness set. Without loss of generality, let $W^0 = \{\lambda(u, v)\}$. As $E(W^0, W^1)$ and $E(W^0, W^1)$ are non-empty sets in $G$ and $\lambda(u, v)$ is of degree two, either $u \in W^1$ and $v \in W^4$ or $v \in W^1$ and $u \in W^4$. In either case,  as $\calW$ is a $C_5$-witness structure of $G$, any path from $u$ to $v$ in $G - \{\lambda(u,  v)\}$ is of length at least three. We will now show that for any edge $uv$ in $H$ which is subdivided while constructing $G$, the length of a shortest path between $u$ and $v$ in $G - \{\lambda(u, v)\}$ is two. This will lead to a contradiction which will enable us to conclude that no vertex in $V(G) \setminus V(H)$ is a singleton witness set in $\calW$. 

Consider the following triples: $(\alpha^0, \alpha^4, x^4_i)$, $(x^0_i, x^4_i, \alpha^0)$, $(x^p_i, \alpha^{p + 1},  x^{p + 1}_i)$, and $(x^2_i, b_j, \alpha^2)$ for every $i \in [N]$, $j \in [M]$, and $p \in \{0, 1, 2, 3\}$. For any triple $(u', v', w')$, there is a path from $u'$ to $v'$ via $w'$ of length two in $G$ that does not contain $\lambda(u', v')$, i.e., $w'$ is a common neighbor of $u'$ and $v'$ in $G$. This implies that the vertex obtained by subdividing the edge $u'v'$ in $H$ while constructing $G$ cannot be a singleton witness set in $\calW$. However, these triples represent all edges of $H$ that were subdivided while constructing $G$. Hence,  there is no vertex in $V(G) \setminus V(H)$ which is a singleton witness set in $\calW$. Equivalently, for any $\lambda(u, v) \in V(G)$, if $\lambda(u, v)$ is contained in $W^{\star} \in \calW$, then $u \in W^{\star}$ or $v \in W^{\star}$. That is, for any edge $uv \in E(H)$ which was subdivided while constructing $G$, the vertices $u, v$ are in the same witness set or in adjacent witness sets. 

Let $\calW'$ be the partition of $V(H)$ obtained from $\calW$ by removing vertices in $V(G) \setminus V(H)$. Formally, $\calW' = \{W^{\star} \setminus (V(G) \setminus V(H)) \mid W^{\star} \in \calW\}$. Since, no vertex in $V(G)\setminus V(H)$ is a singleton witness set in $\calW$,  $\calW'$ contains five non-empty sets. Also, the endpoints of any edge $uv \in E(H) \setminus E(G)$ are either in the same witness set or in witness sets that are adjacent in $G$. In both the cases, each witness set of $\calW$ remains connected in $H$ even after deleting the vertices in $V(G) \setminus V(H)$. This implies that $\calW'$ is a $C_5$-witness structure of $H$.
\end{proof}

\subsection{Properties of a $C_5$-Witness Structure of $H$}

Before we state properties of $H$, we mention the following observation.

\begin{observation}
\label{obs:c5-bipart}
In any partition $\{X,Y\}$ of the vertices of an induced 5-cycle into 2 non-empty parts, $E(X,Y) \neq \emptyset$. 
\end{observation}

Note that this observation holds for any connected subgraph. Now, we state certain properties of vertex subsets in $H$ that we later use to show properties of a $C_5$-witness structure of $H$.

\begin{observation}
\label{obs:H-prop}
$X^0, X^1, X^2, X^3, X^4, Y^c$ and $Y^b$ are independent sets and $V_\alpha$ is a dominating set in $H$. Further, $X^0 \cup X^4 \cup Y^c \subseteq N(\alpha^0)$, $X^1 \cup X^0 \subseteq N(\alpha^1)$, $X^2 \cup X^1 \cup Y^c \cup Y^b \subseteq N(\alpha^2)$, $X^3 \cup X^2 \subseteq N(\alpha^3)$ and $X^4 \cup X^3 \cup Y^b \subseteq N(\alpha^4)$.
\end{observation}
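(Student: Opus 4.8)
The plan is to read every assertion directly off the four-step construction of $H$; this is a routine verification, so I would only track which edges of $H$ are incident to the sets in question, with no cleverness needed.

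For independence: the only edges incident to a vertex $x^p_i \in X^p$ are the two cycle edges of $C^i$ (joining $x^p_i$ to $x^{p\pm 1}_i$, never to another vertex of $X^p$) together with edges added in Steps~2--4, each of whose other endpoint lies in $V_\alpha \cup Y^c \cup Y^b$; hence $X^p$ is independent for every $p\in\{0,1,2,3,4\}$. Likewise, every edge at $c_j$ has its other endpoint in $\{\alpha^0,\alpha^2\}$ or is some $x^1_i c_j$ from Step~4, and every edge at $b_j$ has its other endpoint in $\{\alpha^2,\alpha^4\}$ or is some $x^2_i b_j$, so no edge joins two $c$-vertices or two $b$-vertices, i.e. $Y^c$ and $Y^b$ are independent. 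That $V_\alpha$ is dominating is immediate: Step~2 makes each $x^p_i$ adjacent to $\alpha^p$, and Step~3 makes $c_j$ adjacent to $\alpha^0$ and $b_j$ adjacent to $\alpha^2$, so every vertex of $H$ lies in $V_\alpha$ or in $N(V_\alpha)$.

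For the neighbourhood inclusions I would fix $p$ and enumerate all edges incident to $\alpha^p$: the base-cycle edges $\alpha^p\alpha^{p-1}$ and $\alpha^p\alpha^{p+1}$; the edge $x^p_i\alpha^p$ from the first edge set of Step~2; the ``shifted'' edge $x^{p-1}_i\alpha^p$ from the second edge set of Step~2 (namely $x^4_i\alpha^0$, $x^0_i\alpha^1$, $x^1_i\alpha^2$, $x^2_i\alpha^3$, $x^3_i\alpha^4$, indices mod $5$); and, from Step~3, the edge $c_j\alpha^p$ only when $p\in\{0,2\}$ and the edge $b_j\alpha^p$ only when $p\in\{2,4\}$. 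Collecting these for each $p$ yields exactly the five stated inclusions: $N(\alpha^0)\supseteq X^0\cup X^4\cup Y^c$, $N(\alpha^1)\supseteq X^1\cup X^0$, $N(\alpha^2)\supseteq X^2\cup X^1\cup Y^c\cup Y^b$, $N(\alpha^3)\supseteq X^3\cup X^2$, and $N(\alpha^4)\supseteq X^4\cup X^3\cup Y^b$. (These are in fact equalities once $\alpha^{p\pm1}$ is adjoined, but only the inclusions are used later.) The only thing requiring care is the index shift between the two edge sets of Step~2 — which is precisely why each $\alpha^p$ is adjacent to all of $X^p$ and all of $X^{p-1}$ — together with keeping straight the restricted ranges of $p$ for which the clause-gadget sets $Y^c$ and $Y^b$ contribute; beyond this bookkeeping there is no genuine obstacle.
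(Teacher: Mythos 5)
Your verification is correct and is exactly the routine check the paper intends: the observation is stated without proof, as it follows directly from Steps 1--4 of the construction of $H$, and your case-by-case enumeration of the edges incident to each $x^p_i$, $c_j$, $b_j$ and $\alpha^p$ (including the index shift in the second edge set of Step~2) matches the construction precisely. Nothing further is needed.
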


Next, we show a property of a $C_5$-witness structure of $H$ that will be crucial to proving the correctness of the reduction. As we have indicated in the construction of $H$, we need a handle on the base cycle of the $C_5$-witness structure (for \yes-instances) which Lemma \ref{lemma:c5-ws} provides.

\begin{lemma}
\label{lemma:c5-ws}
In any $C_5$-witness structure of $G$, no two vertices of $V_\alpha=\{\alpha^0, \alpha^1, \alpha^2, \alpha^3, \alpha^4\}$ are in the same witness set.
\end{lemma}
\begin{proof} 
Suppose $\calW=\{W^i \mid i \in [4] \cup \{0\}\}$ is a $C_5$-witness structure of $H$ where $E(W^i, W^j) \neq \emptyset$ if and only if $j=(i\pm 1) \mod 5$. We argue that $V_\alpha$ has a non-empty intersection with each $W^i$. Suppose $V_\alpha \subseteq W^i$ for some $0 \leq i \leq 4$. Then, $W^{(i+2) \mod 5}=\emptyset$ and $W^{(i+3) \mod 5}=\emptyset$ leading to a contradiction. Suppose $V_\alpha$ intersects  exactly two witness sets. We will consider the cases when these sets are $W^0,W^1$ and $W^0,W^2$. The other cases are similar to these cases. If $V_\alpha$ intersects only with $W^0$ and $W^1$, then since $V_\alpha$ is a dominating set in $H$ it follows that $W^3=\emptyset$ and this leads to a contradiction. Suppose $V_\alpha$ intersects only with $W^0$ and $W^2$. From Observation \ref{obs:c5-bipart}, this implies that $E(W^0,W^2)\neq \emptyset$ leading to a contradiction. Suppose $V_\alpha$ intersects exactly four witness sets, say $W^0, W^1, W^2$, and $W^3$. Without loss of generality, assume $\alpha^0 \in W^0$. As $\alpha^1$ and $\alpha^4$ are adjacent to $\alpha^0$, we have $\{ \alpha^1, \alpha^4 \} \subseteq W^0 \cup W^1$. Then, one of $\alpha^2$ or $\alpha^3$ is in $W^2$ and the other is in $W^3$. However, as $\alpha^1 \alpha^2, \alpha^3 \alpha^4 \in E(H)$, neither $\alpha^2$ nor $\alpha^3$ can be in $W^3$ implying that $W^3=\emptyset$ and leading to a contradiction. 

Suppose $V_\alpha$ intersects exactly three witness sets. Without loss of generality, let $\alpha^0 \in W^0$. We consider the following cases.
\begin{itemize}
\item[-]Case (i) $V_\alpha$ intersects $W^0, W^1$ and $W^2$.
\item[-]Case (ii) $V_\alpha$ intersects $W^0, W^1$ and $W^4$.
\item[-]Case (iii) $V_\alpha$ intersects $W^0, W^2$ and $W^3$. This leads to contradiction as Observation \ref{obs:c5-bipart} implies $E(W^0,W^2 \cup W^3)\neq \emptyset$.
\item[-]Case (iv) $V_\alpha$ intersects $W^0, W^2$ and $W^4$. This leads to contradiction as Observation \ref{obs:c5-bipart} implies $E(W^2,W^0 \cup W^4)\neq \emptyset$.
\item[-]Case (v) $V_\alpha$ intersects $W^0, W^4$ and $W^3$. This is similar to Case (i). 
\item[-]Case (vi) $V_\alpha$ intersects $W^0, W^1$ and $W^3$. This is similar to Case (iv). 
\end{itemize}
\noindent Consider Case (i). As $\alpha^1$ and $\alpha^4$ are adjacent to $\alpha^0$, we have $\{ \alpha^1, \alpha^4 \} \subseteq W^0 \cup W^1$. Then, at least one of $\alpha^2$ or $\alpha^3$ is in $W^2$ and since $\alpha^2 \alpha^3 \in E(H)$, neither $\alpha^2$ nor $\alpha^3$ can be in $W^0$. Thus, we have $\{\alpha^2, \alpha^3\} \subseteq  W^1 \cup W^2$. Since $E(W^0, W^3)=\emptyset$ and $E(W^1, W^3)=\emptyset$, we have $W^3 \cap N(\alpha^0) = \emptyset$, $W^3 \cap N(\alpha^1)  = \emptyset$ and $W^3 \cap N(\alpha^4) = \emptyset$. From Observation \ref{obs:H-prop}, this implies that $W^3 \subseteq X^2$. Similarly, since $E(W^1, W^4)=\emptyset$ and $E(W^2, W^4)=\emptyset$, we have $W^4 \cap N(\alpha^2) = \emptyset$ and $W^4 \cap N(\alpha^3) = \emptyset$. From Observation \ref{obs:H-prop}, this implies $W^4 \subseteq (X^0 \cup X^4)$. However, by the construction, $E(X^2, X^0 \cup X^4)=\emptyset$ implying that $E(W^3, W^4)=\emptyset$ which leads to a contradiction. 

Let us now consider Case (ii). Recall that $\alpha^0 \in W^0$. Then, either $\alpha^1 \in W^0 \cup W^1$ or $\alpha^1 \in W^0 \cup W^4$. As both these cases are similar, we consider the case when $\alpha^1 \in W^0 \cup W^1$. Suppose $\alpha^1 \in W^1$.   
Then, we have $\{\alpha^1, \alpha^2\} \subseteq W^0 \cup W^1$ since $\alpha^1 \alpha^2 \in E(H)$. We will show that this leads to a contradiction.
At least one of $\alpha^3$ or $\alpha^4$ is in $W^4$ and since $\alpha^3 \alpha^4 \in E(H)$, neither $\alpha^3$ nor $\alpha^4$ can be in $W^1$. Thus, we have $\{\alpha^3, \alpha^4\} \subseteq W^0 \cup W^4$. Since $E(W^0, W^3)=\emptyset$ and $E(W^1, W^3)=\emptyset$, we have $W^3 \cap N(\alpha^0) = \emptyset$, $W^3 \cap N(\alpha^1) = \emptyset$ and $W^3 \cap N(\alpha^2) = \emptyset$. From Observation \ref{obs:H-prop}, this implies $W^3 \subseteq X^3$. Similarly, since $E(W^0, W^2)=\emptyset$ and $E(W^4, W^2)=\emptyset$, we have $W^2 \cap N(\alpha^0)= \emptyset$, $W^2 \cap N(\alpha^3) = \emptyset$ and $W^2 \cap N(\alpha^4)= \emptyset$. From Observation \ref{obs:H-prop}, this implies $W^2 \subseteq X^1$. However, by construction, $E(X^1, X^3)=\emptyset$  implying that $E(W^2, W^3)=\emptyset$ which leads to a contradiction. 

Suppose $\alpha^1 \in W^0$. If $\alpha^2 \in W^0$, then one of $\alpha^3$ or $\alpha^4$ is in $W^1$ and the other is in $W^4$ resulting in an edge between $W^1$ and $W^4$. Thus, $\alpha^2 \in W^1$ or $\alpha^2 \in W^4$. As these cases are similar, we only consider $\alpha^2 \in W^1$. Then we once again have $\{\alpha^1, \alpha^2\} \subseteq W^0 \cup W^1$ which leads to a contradiction. 
\end{proof}

\subsection{Equivalence of $H$ and $\psi$}
Now, we are ready to establish the equivalence of $\psi$ and $H$.

\begin{restatable}{lemma}{satfive}
\label{lemma:Pos-NAE-SAT-C5-Contractibility-Part-1-forward}
If $\psi$ is a \yes-instance of \textsc{Positive NAE-SAT} then $H$ is a \yes-instance of \linebreak \textsc{$C_5$-Contractibility}.
\end{restatable}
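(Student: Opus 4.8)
The plan is to turn a not-all-equal satisfying assignment $\tau \colon \{X_1,\dots,X_N\} \to \{\true,\false\}$ of $\psi$ directly into an explicit $C_5$-witness structure $\calW = \{W^0,W^1,W^2,W^3,W^4\}$ of $H$, where all superscripts/indices are read modulo $5$. The base cycle $V_\alpha$ is spread along the target cycle: put $\alpha^p \in W^p$ for each $p$. Each variable cycle $C^i$ is placed according to $\tau(X_i)$: if $\tau(X_i)=\true$, put $x^p_i \in W^p$ for all $p$ (so $C^i$ is ``aligned'' with $V_\alpha$); if $\tau(X_i)=\false$, put $x^p_i \in W^{p+1}$ for all $p$ (so $C^i$ is ``shifted by one''). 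Finally, as anticipated in the construction, put $c_j \in W^1$ and $b_j \in W^3$ for every $j$. Equivalently, $W^p = \{\alpha^p\} \cup \{x^p_i : \tau(X_i)=\true\} \cup \{x^{p-1}_i : \tau(X_i)=\false\}$, with $W^1$ additionally containing $Y^c$ and $W^3$ additionally containing $Y^b$. I would not need Lemma~\ref{lemma:c5-ws} or Observations~\ref{obs:c5-bipart}--\ref{obs:H-prop} here; this direction is a self-contained construction.

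Then I would check the three requirements of Definition~\ref{def:graph-contraction}. First, $\calW$ is a partition of $V(H)$: every vertex is assigned to exactly one set. Second, I would verify that every edge of $H$ has both endpoints in one $W^p$ or in two consecutive sets $W^p, W^{p+1}$, which at once yields $E(W^p,W^{p+2})=\emptyset$; this is done by running through the five edge types of $H$: the $V_\alpha$-edges $\alpha^p\alpha^{p+1}$ go between consecutive sets; a $C^i$-edge $x^p_i x^{p+1}_i$ lands in $(W^p,W^{p+1})$ if $\tau(X_i)=\true$ and in $(W^{p+1},W^{p+2})$ if $\tau(X_i)=\false$; the edges $x^p_i\alpha^p$ and $x^p_i\alpha^{p+1}$ are checked for both values of $\tau(X_i)$ and always stay within a set or between consecutive sets; and the clause edges $c_j\alpha^0, c_j\alpha^2, x^1_ic_j$ stay within $\{W^0,W^1,W^2\}$ around $W^1$, while $b_j\alpha^2, b_j\alpha^4, x^2_ib_j$ stay within $\{W^2,W^3,W^4\}$ around $W^3$. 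Since $\alpha^p\alpha^{p+1}\in E(H)$ with $\alpha^p\in W^p$, $\alpha^{p+1}\in W^{p+1}$, each pair of consecutive sets is indeed adjacent, so the adjacency condition holds exactly. Third, each $W^p$ is connected, with $\alpha^p$ acting as a hub: every $x$-vertex in $W^p$ is adjacent to $\alpha^p$ (via $x^p_i\alpha^p$ when $\tau(X_i)=\true$, and via $x^{p-1}_i\alpha^p$ when $\tau(X_i)=\false$), and each $c_j\in W^1$ (resp.\ $b_j\in W^3$) reaches $\alpha^1$ (resp.\ $\alpha^3$) along a two-edge path $c_j - x^1_i - \alpha^1$ (resp.\ $b_j - x^2_i - \alpha^3$).

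The only place the hypothesis is actually used — and the conceptual heart of the argument — is the connectivity of $W^1$ and of $W^3$: the path $c_j - x^1_i - \alpha^1$ needs some $X_i \in C_j$ with $\tau(X_i)=\true$ so that $x^1_i \in W^1$, and the path $b_j - x^2_i - \alpha^3$ needs some $X_i \in C_j$ with $\tau(X_i)=\false$ so that $x^2_i \in W^3$; these are precisely the ``at least one true'' and ``at least one false'' guarantees of a not-all-equal satisfying assignment. I expect the bulk of the write-up to be the routine edge-by-edge verification of the second point, and I do not foresee a genuine obstacle: the placement of $V_\alpha$ and of the cycles $C^i$ is essentially forced by the design of $H$, and the only care needed is to keep the five index shifts (and the roles of $c_j$ at $W^1$ and $b_j$ at $W^3$) consistent throughout.
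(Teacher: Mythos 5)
Your construction is exactly the paper's: the same witness sets $W^p=\{\alpha^p\}\cup\{x^p_i:\pi(X_i)=\true\}\cup\{x^{p-1}_i:\pi(X_i)=\false\}$ (with $Y^c\subseteq W^1$, $Y^b\subseteq W^3$), the same hub-at-$\alpha^p$ connectivity argument, and the same use of the not-all-equal property to connect $c_j$ to $\alpha^1$ and $b_j$ to $\alpha^3$. The proposal is correct; it is in fact slightly more explicit than the paper on the edge-by-edge adjacency check, which the paper leaves as ``easy to verify.''
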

\begin{proof}
Suppose $\pi : \{X_1, X_2, \dots, X_N\} \mapsto \{\true, \false\}$ is a not-all-equal satisfying assignment of $\psi$. Define the following partition of $V(H)$.
\begin{align*}
W^0  :=  &\ \{\alpha^0\} \cup \{x^0_i \mid i \in [N], \pi(X_i) = \true\} \cup \{x^4_i \mid i \in [N], \pi(X_i) = \false\}, \\
W^1  := &\ \{\alpha^1\} \cup \{x^1_i \mid i \in [N], \pi(X_i) = \true\} \cup \{x^0_i \mid i \in [N], \pi(X_i) = \false\}\\ 
&\ \cup \{c_j \mid j \in [M]\}, \\
W^2 :=  &\ \{\alpha^2\} \cup \{x^2_i \mid i \in [N], \pi(X_i) = \true\} \cup \{x^1_i \mid i \in [N], \pi(X_i) = \false\}, \\
W^3 := &\ \{\alpha^3\} \cup \{x^3_i \mid i \in [N], \pi(X_i) = \true\} \cup \{x^2_i \mid i \in [N], \pi(X_i) = \false\}\\
&\ \cup \{b_j \mid j \in [M]\}, \\
W^4 := &\ \{\alpha^4\} \cup \{x^4_i \mid i \in [N], \pi(X_i) = \true\} \cup \{x^3_i \mid i \in [N], \pi(X_i) = \false\}, 
\end{align*}
Clearly $W^0$, $W^2$, and $W^4$ are connected sets. For any $j \in [M]$, there exists $i \in [N]$ such that $x^1_i \in W^1$ (since $\pi$ sets at least one of the variables in $C_j$ to \true) and $i' \in [N]$ such that $x^2_{i'} \in W^3$ (since $\pi$ sets at least one of the variables in $C_j$ to \false). Also, $c_jx^1_i, b_jx^2_{i'} \in E(H)$. As for every $i \in [N]$, $\alpha^1$ is adjacent to $x^1_i$ and $\alpha^3$ is adjacent to $x^2_i$, it follows that $W^1$ and $W^3$ are connected sets. Now, it is easy to verify that $\{W^0, W^1, W^2, W^3, W^4\}$ is a $C_5$-witness structure. 
\end{proof}

In the proof of the converse of Lemma \ref{lemma:Pos-NAE-SAT-C5-Contractibility-Part-1-forward}, we crucially use Lemma \ref{lemma:c5-ws}. That is, if $H$ is contractible to a 5-cycle, then in any $C_5$-witness structure $\{W^0, W^1, W^2, W^3, W^4\}$ with $E(W^i, W^j) \neq \emptyset$ if and only if $j=(i \pm 1) \mod 5$, each of the five witness sets has a non-empty intersection with $V_\alpha$. This structure along with a couple of other properties translates to a not-all-equal satisfying assignment of $\psi$. 

\begin{lemma}
\label{lemma:Pos-NAE-SAT-C5-Contractibility-Part-1-backward}
If $H$ is a \yes-instance of \textsc{$C_5$-Contractibility} then $\psi$ is a \yes-instance of \textsc{Positive NAE-SAT}.
\end{lemma}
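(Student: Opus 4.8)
The plan is to start from a $C_5$-witness structure $\calW=\{W^0,W^1,W^2,W^3,W^4\}$ of $H$ with $E(W^i,W^j)\neq\emptyset$ iff $j=(i\pm 1)\bmod 5$, and invoke Lemma~\ref{lemma:c5-ws} to fix the ``base cycle'': up to relabelling, $\alpha^p\in W^p$ for each $p\in\{0,1,2,3,4\}$. The first block of work is to pin down where each $x^p_i$ can go. For a fixed $i$, the vertices of the $5$-cycle $C^i$ together with their prescribed edges to $V_\alpha$ force each $x^p_i$ into a witness set adjacent to or equal to the ones containing its neighbours; using Observation~\ref{obs:H-prop} (which lists exactly which $\alpha^p$'s dominate each $X^q$) plus the no-nonconsecutive-edges condition, I would show that for each $i$ there are only two admissible placements of $C^i$: the ``$\true$'' pattern $x^p_i\in W^p$ for all $p$, or the ``$\false$'' pattern $x^p_i\in W^{p+1}$ for all $p$ (indices mod $5$). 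This mirrors the two patterns used in Lemma~\ref{lemma:Pos-NAE-SAT-C5-Contractibility-Part-1-forward}, and the argument is essentially: $x^p_i$ is adjacent to $\alpha^p$ and $\alpha^{p+1}$, so $x^p_i\in W^{p}\cup W^{p+1}$; combined with $C^i$ being an induced $5$-cycle and Observation~\ref{obs:c5-bipart}, a case check rules out mixed patterns. Define $\pi(X_i)=\true$ in the first case and $\false$ in the second.

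Next I would analyse the clause vertices $c_j$ and $b_j$. Since $c_j$ is adjacent only to $\alpha^0,\alpha^2$ and to the $x^1_i$ with $X_i\in C_j$, and its witness set must be adjacent to the two neighbouring witness sets around the cycle, I would argue $c_j\in W^1$ (the only set whose cyclic neighbours $W^0,W^2$ can absorb the edges $c_j\alpha^0$, $c_j\alpha^2$ while avoiding a chord); connectivity of $W^1$ then forces $c_j$ to have a neighbour inside $W^1$, and its only possible neighbours in $W^1$ are vertices $x^1_i$ with $X_i\in C_j$ that lie in $W^1$, i.e.\ variables set to $\true$. Hence at least one variable of $C_j$ is $\true$. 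Symmetrically, $b_j$ (adjacent to $\alpha^2,\alpha^4$ and to $x^2_i$ for $X_i\in C_j$) must lie in $W^3$ and needs a neighbour $x^2_i\in W^3$; by the two-pattern classification, $x^2_i\in W^3$ exactly when $X_i$ was set to $\false$. Hence at least one variable of $C_j$ is $\false$. Therefore every clause is not-all-equal satisfied, so $\psi$ is a \yes-instance.

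I expect the main obstacle to be the case analysis establishing that each $C^i$ follows one of exactly two global patterns, and in particular ruling out ``degenerate'' placements where some $x^p_i$ coincides with an $\alpha$-vertex's witness set in a way that collapses part of $C^i$ into a single witness set or spreads it across three. The danger is that $C^i$'s five vertices need not each be a singleton, so I must argue at the level of which $W^p$ each $x^p_i$ belongs to rather than reasoning vertex-by-vertex; the key leverage is Observation~\ref{obs:c5-bipart} applied to the induced $5$-cycle $C^i$ (so $C^i$ cannot lie in a single witness set or in two non-consecutive ones) together with Observation~\ref{obs:H-prop} pinning the ``vertical'' edges to $V_\alpha$. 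A secondary subtlety is justifying $c_j\in W^1$ and $b_j\in W^3$ rather than some other set: here I would use that the cyclic neighbours of $c_j$'s witness set must include sets meeting $N(\alpha^0)$ and $N(\alpha^2)$, and that $\alpha^1\in W^1$ is the unique index whose two cyclic neighbours are $W^0$ and $W^2$. Once these structural facts are in hand, extracting the assignment and verifying the NAE condition for each clause is routine.
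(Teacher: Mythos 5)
Your proposal follows essentially the same route as the paper's proof: invoke Lemma~\ref{lemma:c5-ws} to fix $\alpha^p\in W^p$, use the adjacency of $x^p_i$ to $\alpha^p$ and $\alpha^{p+1}$ to confine $x^p_i$ to $W^p\cup W^{(p+1)\bmod 5}$ and then derive the two coherent placements of each $C^i$, place $c_j$ in $W^1$ and $b_j$ in $W^3$ via their adjacency to $\alpha^0,\alpha^2$ and $\alpha^2,\alpha^4$ respectively, and extract the NAE assignment from the connectivity of $W^1$ and $W^3$. The plan is correct and the details you flag as potential obstacles are handled exactly as you anticipate.
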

\begin{proof}
Suppose $\calW = \{W^0, W^1, W^2, W^3, W^4\}$ is a $C_5$-witness structure of $H$ where $E(W^i, W^j) \neq \emptyset$ if and only if $j=(i\pm 1) \mod 5$. Then, by Lemma \ref{lemma:c5-ws}, $V_\alpha$ has a non-empty intersection with each $W^i$. Without loss of generality, let $\alpha^p \in W^p$ for every $p \in \{0, 1, 2, 3, 4\}$. We first argue that for any $i \in [N]$, the set $S_i=\{x^0_i, x^1_i, x^2_i, x^3_i, x^4_i\}$ also has a non-empty intersection with each $W^j$. Suppose $S_i \cap W^0 =\emptyset$. Then, as $\alpha^0$ is adjacent to $x^0_i, x^4_i$ and $x^0_ix^4_i \in E(H)$, either $\{x^0_i, x^4_i\} \subseteq W^1$ or $\{x^0_i, x^4_i\} \subseteq W^4$. As $\alpha^4x^4_i, \alpha^1x^0_i   \in E(H)$,  both these cases contradict the fact that $E(W^1, W^4) = \emptyset$. Using the similar arguments, it follows that $S_i$ has a non-empty intersection with each $W^j$. 

Next, we claim that for each $i \in [N]$ and $0 \leq p \leq 4$, $x^p_i \in W^p \cup W^{(p+1) \mod 5}$. This is due to the fact that $x^p_i$ is adjacent with $\alpha^p$ and $\alpha^{p + 1\ (\text{mod}\ 5)}$. Now, we show that for each $i \in [N]$ and $0 \leq p \leq 4$, $x^p_i \in W^p$ if and only if $x^{(p+1) \mod 5}_i \in W^{(p+1) \mod 5}$ and $x^p_i \in W^{(p+1) \mod 5}$ if and only if $x^{(p+1) \mod 5}_i \in W^{(p+2) \mod 5}$. If $x^0_i \in W^0$ and $x^1_i \notin W^1$,  then $E(W^0, W^2) \cup E(W^0, W^3) \cup E(W^2, W^4) \neq \emptyset$ leading to a contradiction. If $x^0_i \in W^1$ and $x^1_i \notin W^2$,  then $E(W^1, W^3) \cup E(W^0, W^2) \cup E(W^1, W^4) \neq \emptyset$ leading to a contradiction. Similar arguments hold for $x^1_i, x^2_i,x^3_i$ and $x^4_i$. This is indicated by the collections of purple (dotted) edges and green (dashed) edges in Figure \ref{fig:c5-contractiblity-bipartite-np-hard}. We will associate these two choices with setting $X_i$ to \true\ and to \false, respectively.

We now construct an assignment $\pi: \{X_1, X_2, \dots, X_N\} \mapsto \{\true, \false\}$. Consider the witness set $W^1$. For each $i \in [N]$, if $x^1_i \in W^1$ then set $\pi(X_i) = \true$, otherwise ($x^1_i \in W^2$) set $\pi(X_i) = \false$. We argue that $\pi$ is a not-all-equal satisfying assignment for $\psi$. We show that for each $j \in [M]$, $c_j \in W^1$ and $b_j \in W^3$, further, the clause $C_j$ has variables $X_i$ and $X_{i'}$ such that $x^1_i \in W^1$ and $x^2_{i'} \in W^3$. Observe that $c_j$ (being adjacent with $\alpha^0$ and $\alpha^2$) is in the same witness set that has $\alpha^1$ and $b_j$ (being adjacent with $\alpha^2$ and $\alpha^4$) is in the same witness set that has $\alpha^3$. Thus, for each $j \in [M]$, $c_j \in W^1$ and $b_j \in W^3$. By the property of witness structures, $W^1$ and $W^3$ are connected sets. As the only vertices outside $V_\alpha$ that are adjacent to $c_j$ are vertices $x^1_i$ corresponding to variables $X_i$ appearing in $C_j$, it follows that $C_j$ has a variable $X_i$ such that $x^1_i \in W^1$. Similarly, as the only vertices outside $V_\alpha$ that are adjacent to $b_j$ are vertices $x^2_i$ corresponding to variables $X_i$ appearing in $C_j$, it follows that $C_j$ has a variable $X_{i'}$ such that $x^2_{i'} \in W^3$. 
\end{proof}

\section{$C_4$-Contractiblity on Biparitite Graphs}
\label{sec:c4-contractiblity-bipartite}
In this section, we prove Theorem~\ref{thm:c-4-contractiblity-np-hard}. It is easy to verify that $C_4$-\textsc{Contractibility} is in \NP. Given an instance $\psi$ of \textsc{Positive NAE-SAT} with $N$ variables and $M$ clauses, we give a polynomial-time algorithm that outputs a bipartite graph $G$ equivalent to $\psi$ (Lemmas \ref{lemma:Pos-NAE-SAT-Constrained-C4-Contraction-forward} and \ref{lemma:Pos-NAE-SAT-Constrained-C4-Contraction-backward}).

\subsection{Construction of $G$}
Let $\{X_1, X_2, \dots, X_N\}$ and $\{C_1, C_2, \dots, C_M\}$ be the sets of variables and clauses, respectively, in $\psi$. The graph $G$ with a partition $\{V, V'\}$ of its vertex set is constructed as follows. See Figure~\ref{fig:c4-contractiblity-bipartite-np-hard} for an illustration.

\begin{enumerate}
\item Add vertices $t, f$ to $V$, vertices $t', f'$ to $V'$ and edges $tt', ff'$ to $E(G)$. This set would eventually form the ``base cycle'' in the witness structure. 
\item For every $i \in [N]$, add vertices $x_i, y_i, z_i$ to $V$ and $x'_i, y'_i, z'_i$ to $V'$ corresponding to the variable $X_i$. Further, make every vertex in $\{x'_i, y'_i, z'_i\}$ adjacent to every vertex in $\{x_i, t, f\}$ and every vertex in $\{x_i, y_i, z_i\}$ adjacent to every vertex in $\{x'_i, t', f'\}$. Let $X=\{x_i \mid i \in [N]\}$, $X'=\{x'_i \mid i \in [N]\}$, $Y=\{y_i \mid i \in [N]\}$, $Y'=\{y'_i \mid i \in [N]\}$, $Z=\{z_i \mid i \in [N]\}$, $Z'=\{z'_i \mid i \in [N]\}$. The neighborhood of $X'$ is set so that every element of $X'$ is in the witness set containing $t$ or $f$. This forces every element of $X$ to be respectively in the witness set containing $t'$ or $f'$. These binary choices would be associated with setting the corresponding variable to \true\ or \false. The sets $Y$, $Y'$, $Z$, $Z'$ are added for technical reasons.
\item  For every $j \in [M]$, add vertices $c_j, b_j$ to $V$, $c'_j, b'_j$ to $V'$ and edges $c_jf', b_jf'$, $c'_jt$, $b'_jt$ to $E(G)$ corresponding to clause $C_j$. Let $C=\{c_j \mid j \in [M]\}$, $C'=\{c'_j \mid j \in [M]\}$, $B=\{b_j \mid j \in [M]\}$, $B'=\{b'_j \mid j \in [M]\}$. Subsequently, we will add more vertices (sets $D$ and $D'$ defined subsequently) adjacent to vertices in $C \cup B \cup C' \cup B'$ so that no vertex in $B \cup C$ is in a witness set that is non-adjacent to the one containing $t$ and no vertex in $B' \cup C'$ is in a witness set that is non-adjacent to the one containing $f'$.
\item  For every $i \in [N]$ and $j \in [M]$, if $X_i$ appears in $C_j$ then add edges $c_jx'_i, b_jx'_i$, $x_ic'_j$, and $x_ib'_j$ to $E(G)$. This step is the one that encodes the clause-variable relationship. Relevant variables are expected to help clause vertices to be connected to witness sets containing them.   
\item Let $\calD$ denote the following collection of pairs of vertices: 
$\{\{t, f\}, \{t', f'\}\}$  $\bigcup \{\{t, c_j\}, \{t, b_j\},$  $\{f', c'_j\}, \{f', b'_j\}  \mid j \in [M] \}$. Note that for any pair of vertices in $\calD$, either both elements of the pair are in $V$ or both are in $V'$. For every pair $\{u, v\}$ of vertices in $\calD$ that are in $V$, add three vertices $d'_{u,v,1}$, $d'_{u, v, 2}$, $d'_{u, v, 3}$ to $V'$ and make them adjacent to both $u, v$. For every pair $\{u, v\}$ of vertices in $\calD$ that are in $V'$, add three vertices $d_{u,v,1}$, $d_{u, v, 2}$, $d_{u, v, 3}$ to $V$ and make them adjacent to both $u, v$. The pairs in $\calD$ are the ones that should not be in non-adjacent witness sets and the common neighbors are added to achieve this property. 
\end{enumerate}
This completes the construction of $G$. 

\begin{figure}[t]
  \begin{center}
    \includegraphics[scale=0.7]{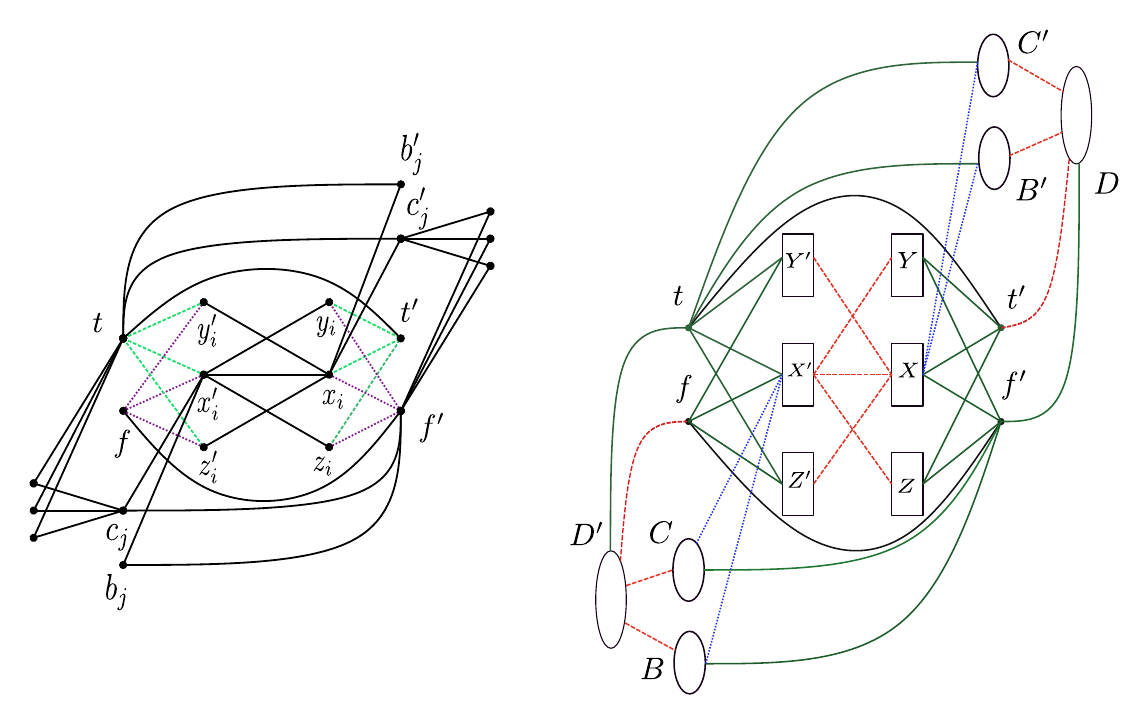}
    \end{center}
   \caption{(Left) The graph $G$ where only three vertices each in $D$ and $D'$ shown with purple (dotted) edges denote setting variable $X_i$ to \true\ and green (dashed) edges denote setting $X_i$ to \false. (Right) Adjacency relation between different subsets of vertices.  \label{fig:c4-contractiblity-bipartite-np-hard}}
\end{figure}

As the reduction always adds edges with one of its endpoints in $V$ and the other endpoint in $V'$, $G$ is a bipartite graph with bipartition $\{ V, V' \}$. Let $D=\{d_{u, v, p} \mid \{u, v\} \in \calD, u, v \in V\ \text{and}\ p \in [3]\}$ and $D'=\{d'_{u, v, p} \mid \{u, v\} \in \calD, u, v \in V'\ \text{and}\ p \in [3]\}$.

\subsection{Properties of a Nice $C_4$-Witness Structure of $G$}

Now, we show that if $G$ is contractible to a 4-cycle, then there is a $C_4$-witness structure of $G$ satisfying certain nice properties. For this purpose, we introduce the following notion of a {\em nice $C_4$-witness structure}. 

\begin{definition}
A $C_4$-witness structure of $G$ is a nice $C_4$-witness structure if the following properties hold.\\
(P1) For every pair $\{u,v\}$ in $\mathcal{D}$, $u$ and $v$ are in the same or adjacent witness sets.\\
(P2) Every vertex in $D \cup D'$ is in a big witness set. Further, every vertex in $D'$ is in the same witness set as $t$ and every vertex in $D$ is in the same witness set as $f'$.
\end{definition}

\noindent Next, we show the existence of a nice $C_4$-witness structure for \yes-instances.

\begin{restatable}{lemma}{niceWS}
\label{lemma:nice-c4-ws}
If $G$ is contractible to a 4-cycle, then there is a nice $C_4$-witness structure of $G$.
\end{restatable}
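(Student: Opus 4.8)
The plan is to start from an arbitrary $C_4$-witness structure $\calW = \{W^0, W^1, W^2, W^3\}$ of $G$ (with $E(W^i, W^j) \neq \emptyset$ iff $j = (i \pm 1) \bmod 4$) and massage it in stages into a nice one. The first task is to pin down where the ``base cycle'' vertices $t, f, t', f'$ lie. Since $t t' \in E(G)$ and $f f' \in E(G)$, and since for each pair $\{u,v\} \in \calD$ the graph contains three common neighbours of $u$ and $v$ (the $d$- or $d'$-vertices), I would first argue that any such triple of common neighbours forces $u$ and $v$ to be in the same or adjacent witness sets: if $u \in W^i$ and $v \in W^{i+2}$, then each of the three common neighbours would have to lie in a witness set adjacent to both $W^i$ and $W^{i+2}$, i.e.\ in $W^{i+1} \cup W^{i+3}$, but more importantly each common neighbour $d$ has $N(d) = \{u,v\}$, so the witness set containing $d$ can only be adjacent to witness sets meeting $\{u,v\}$; a short case analysis (using that $W^{i+1}$ and $W^{i+3}$ are non-adjacent) shows this is impossible unless $d$ is in the same set as $u$ or $v$, and then $u, v$ cannot be ``opposite''. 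This establishes (P1) outright for \emph{every} $C_4$-witness structure, and in particular $\{t,f\}$ and $\{t',f'\}$ each span at most adjacent witness sets.

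Next I would use the pairs $\{t, c_j\}, \{t, b_j\}, \{f', c'_j\}, \{f', b'_j\}$ in $\calD$ together with (P1) to locate $c_j, b_j$ relative to $t$ and $c'_j, b'_j$ relative to $f'$, and then sort out the global picture: because $V(G)$ is covered by four witness sets and the base-cycle vertices together with the $d$-vertices are heavily constrained, I expect to show that (after relabelling) $t$ and $t'$ sit in adjacent sets, $f$ and $f'$ sit in adjacent sets, and the four sets are arranged so that $t$'s set and $f'$'s set are the two ``diagonal'' pair — this is exactly the configuration that lets $X'$ be forced into $t$'s or $f$'s set as promised in the construction commentary. The key leverage is that the only neighbours of each $d'$-vertex are its defining pair, so $d'$-vertices are never singletons forced ``between'' sets; rather, each $d'_{u,v,p}$ must join the witness set of $u$ or of $v$.

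The heart of the argument, and the step I expect to be the main obstacle, is upgrading to (P2): showing we may \emph{assume} every $d'$-vertex is in the \emph{same} witness set as $t$ and every $d$-vertex is in the same witness set as $f'$, and that these are big sets. The idea is a rerouting/exchange argument: given any $C_4$-witness structure, move each $d$- and $d'$-vertex into the ``correct'' witness set and check that connectivity and the adjacency pattern are preserved. Concretely, if $d'_{u,v,p}$ currently lies in the witness set of $v$ but we want it in the witness set of $t$, we need $t$'s witness set to be the same as or adjacent to both $u$'s and $v$'s sets — which is precisely what (P1) and the location of $t, f, t', f'$ from the previous paragraph guarantee, since every pair in $\calD$ involves $t$, $f'$, or $\{t,f\}/\{t',f'\}$. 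After relocating, each $d'$-vertex has all its (at most two) neighbours either in its new set or in adjacent sets, so no spurious adjacencies are created or destroyed, and the sets that lost a $d$-vertex remain connected because a $d$- or $d'$-vertex of degree $2$ is never a cut vertex of a witness set containing one of its neighbours (there are three parallel such vertices for each pair, so removing the ones we relocate cannot disconnect anything). Finally, ``big'' follows since each relocated $d'$-vertex lands in $t$'s set alongside $t$ itself. I would organize the write-up as: (1) prove (P1) via the common-neighbour argument; (2) fix the positions of $t,f,t',f'$ and the cyclic labelling; (3) perform the relocation of $D \cup D'$ and verify witness-structure axioms, yielding (P2) and the claimed lemma.
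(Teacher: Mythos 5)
Your overall strategy coincides with the paper's: establish (P1) from the three common neighbours, rule out bad placements of the vertices of $D \cup D'$, and then relocate them into the witness sets of $t$ and $f'$ by an exchange argument that preserves connectivity and the adjacency pattern. Two specific points need repair before the sketch becomes a proof.

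First, in your argument for (P1), the pigeonhole over the \emph{three} common neighbours is the whole point and must be made explicit. If $u \in W^i$ and $v \in W^{(i+2) \bmod 4}$, then each $d_{u,v,p}$ is indeed forced into $W^{(i+1) \bmod 4} \cup W^{(i+3) \bmod 4}$ (it cannot join $u$'s set, since its edge to $v$ would then join two opposite, hence non-adjacent, witness sets, and symmetrically for $v$'s set); but a \emph{single} such vertex sitting there as a singleton witness set is perfectly consistent with all the axioms --- its set is adjacent to exactly $W^i$ and $W^{(i+2)}$, as required. The contradiction only appears because two of the three must land in the same set, where neither has a neighbour, killing connectivity. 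Your claim that ``a short case analysis shows this is impossible unless $d$ is in the same set as $u$ or $v$'' is false for one vertex and true only for the triple; the same remark applies to ruling out a $d$-vertex as a singleton witness set, which needs either this pigeonhole again (as the paper does) or an explicit appeal to (P1), since a singleton $\{d\}$ forces $u$ and $v$ into the two sets adjacent to it, i.e.\ into opposite sets.

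Second, your intermediate step of fixing the positions of $t,f,t',f'$ before proving (P2) is both unproven and risky: that placement is precisely the content of the \emph{next} lemma in the paper, whose proof uses (P2), so leaning on it here invites circularity. Fortunately the relocation does not need it. Every pair of $\calD$ lying in $V$ contains $t$, so every vertex of $D'$ has neighbourhood $\{t,u\}$ for some $u$; if it is not already in $t$'s witness set, it sits in an adjacent set together with $u$ (having ruled out singletons), it has exactly one neighbour inside that set, so removing it keeps that set connected and non-empty, and adding it to $t$'s set creates no new adjacency because its edge to $t$ already witnessed the adjacency between the two sets involved. That local argument, iterated over $D'$ and then symmetrically over $D$ with $f'$, is all that is required; your parenthetical that connectivity survives ``because there are three parallel such vertices'' is not the operative reason.
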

\begin{proof}
Let $\calW=\{ W^0, W^1, W^2, W^3\}$ be a $C_4$-witness structure of $G$ such that $E(W^i, W^j) \neq \emptyset$ if and only if $j=(i+1) \mod 4$. We first show that $\calW$ satisfies Property (P1). Assume for the sake of contradiction that there is a pair $\{u, v\} \in \calD$ such that $u \in W^i$ and $v \in W^{(i+2) \mod 4}$. Recall there are three pairwise non-adjacent vertices $d_{u, v, 1}$, $d_{u, v, 2}$ and $d_{u, v, 3}$ whose neighborhood in $G$ is $\{u, v\}$. It follows that each of these vertices is in $W^{(i+1) \mod 4}$ or in $W^{(i+3) \mod 4}$. Then, one of these sets, say $W^j$, contains $d_{u, v, 1}$ and $d_{u, v, 2}$. However, $W^j$ has no vertex that is adjacent to either of these vertices contradicting the fact that $W^j$ is a connected set. 

Next, we show that every vertex in $D \cup D'$ is in a big witness set. Assume for the sake of contradiction that there is a vertex, say $d_{u, v, 1} \in D \cup D'$, which is in a singleton witness set $W^i$. As $W^i$ is adjacent to $W^{(i+1) \mod 4}$ and $W^{(i+3) \mod 4}$ while $d_{u, v, 1}$ is adjacent to only $u$ and $v$, it follows that $u \in W^{(i+3) \mod 4}$ and $v \in W^{(i+1) \mod 4}$ Then, we get $d_{u, v, 2}, d_{u, v, 3} \in W^{(i+2) \mod 4}$. However, $N(d_{u, v, 2}) =  N(d_{u, v, 3}) = \{u, v\} \subseteq W^{(i+1) \mod 4} \cup W^{(i+3) \mod 4}$. This contradicts the fact that $W^{(i+2) \mod 4}$ is a connected set.

Subsequently, we assume that $\calW$ satisfies Property (P1) and every vertex in $D \cup D'$ is in a big witness set. Without loss of generality assume $t \in W^0$. Let $\mu(\calW)$ be the number of vertices in $D'$ that are not in $W^0$. If $\mu(\calW) = 0$ then it follows that every vertex in $D'$ is in the same witness set as $t$. Suppose $\mu(\calW) \ge 1$. Let $d'_{t, u, p}$ be a vertex in $D' \setminus W^0$ for some $u \in \{f\} \cup C \cup B$ and $p \in [3]$. As $t$ is adjacent to $d'_{t, u, p}$ and $t \in W^0$, $d'_{t, u, p}$ is either in $W^1$ or in $W^3$. Without loss of generality, suppose $d'_{t, u, p} \in W^1$. As every vertex in $D \cup D'$ is in a big witness set, $W^1$ is a big witness set. As $d'_{t, u, p}$ is adjacent to only $t$ and $u$, we have $u \in W^1$. Define $W^{0}_{\star} = W^0 \cup \{d'_{t, u, p}\}$ and $W^{1}_{\star} = W^1 \setminus \{d'_{t, u, p}\}$.
It is easy to verify that $\calW_{\star} = \{W^{0}_{\star}, W^{1}_{\star}, W^2, W^3\}$ is a $C_4$-witness structure of $G$. Moreover, $\mu(\calW_{\star}) < \mu(\calW)$. Hence, by repeating the this process at most $|D'|$ times, we obtain a $C_4$-witness structure of $G$ in which every vertex in $D'$ is in the same witness set as $t$. Using identical arguments, we can obtain a witness structure which also satisfies the property that every vertex in $D$ is in the same witness set as $f'$. 
\end{proof}

Now, we show a property of a nice $C_4$-witness structure of $G$ that will be crucial to proving the correctness of the reduction.

\begin{restatable}{lemma}{nicerWS}
\label{lemma:nicer-c4-ws}
In any nice $C_4$-witness structure of $G$, every pair of vertices in $\{t,t',f,f'\}$ are in different witness sets.
\end{restatable}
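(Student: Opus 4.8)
The plan is to mimic the proof of Lemma~\ref{lemma:c5-ws}, using the structure imposed by a \emph{nice} witness structure. Let $\calW=\{W^0,W^1,W^2,W^3\}$ be a nice $C_4$-witness structure with $E(W^i,W^j)\neq\emptyset$ iff $j=(i\pm 1)\bmod 4$. First I would record what Property (P1) forces on the four ``base'' vertices $t,f,t',f'$: since $\{t,f\}\in\calD$ and $\{t',f'\}\in\calD$, the pairs $t,f$ and $t',f'$ cannot lie in opposite witness sets; they are in the same or adjacent sets. Combined with the edges $tt'$ and $ff'$, this already pins down a lot. Without loss of generality put $t\in W^0$. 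By (P2), all of $D'$ is in $W^0$ and all of $D$ is in the witness set containing $f'$.

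The heart of the argument is to rule out the bad configurations. The first thing to show is that not all of $\{t,f,t',f'\}$ collapse into too few witness sets. If $\{t,f\}\subseteq W^0$ then (since $tt',ff'\in E(G)$) both $t'$ and $f'$ lie in $W^0\cup W^1\cup W^3$; using (P1) on $\{t',f'\}$ we get $t',f'$ in the same or adjacent sets, and one then checks that $W^2$ ends up empty or non-adjacent to its neighbors — I would push this through by noting which vertices can possibly populate $W^2$. A cleaner route: since $t\in W^0$ and $D'\subseteq W^0$ while $D\subseteq W^{f'}$ (the set containing $f'$), and $D$, $D'$ are large, the only way all four of $W^0,\dots,W^3$ are nonempty is that $t$ and $f'$ are in non-adjacent witness sets (they cannot be adjacent since $G$ is bipartite with $t,f\in V$ and $t',f'\in V'$, wait — $t\in V$, $f'\in V'$, so $tf'$ could be an edge a priori, but it is not in $E(G)$, so they are non-adjacent as vertices; still they could be in adjacent witness sets). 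So I would argue: $f'\notin W^0$ (else, combined with $t'\in W^0\cup W^1\cup W^3$ and $f\in W^0\cup W^1\cup W^3$, two of the four sets are forced empty), hence $f'\in W^1$ or $f'\in W^2$ or $f'\in W^3$; by (P1) applied to $\{t,f\}$, $\{t',f'\}$ and the edges, the symmetric case analysis should leave only $f'\in W^2$, i.e. $t$ and $f'$ are in opposite witness sets. Then $t'\in W^0$ (since $t'$ is adjacent to $t\in W^0$ and, by (P1) on $\{t',f'\}$, cannot be adjacent-or-equal to $f'$'s set only if it is in $W^0$... I need $t'\notin W^2,W^3$), and symmetrically $f\in W^2$. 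Finally $t'\neq f'$ and $t\neq f$ because $t'\in W^0\neq W^2\ni f'$ and $t\in W^0\neq W^2\ni f$, and $t\neq t'$, $f\neq f'$, $t\neq f'$, $f\neq t'$ trivially since they land in $W^0$ vs.\ $W^2$; the remaining pairs $t,f'$ and $f,t'$ are split by construction. That gives all six pairs in distinct sets.

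The key steps in order: (1) fix $t\in W^0$ and invoke (P2) to place $D'\subseteq W^0$ and $D\subseteq W^{f'}$; (2) use the edges $tt'$, $ff'$ and (P1) on the pairs $\{t,f\}$ and $\{t',f'\}$ to constrain $t',f,f'$ to $W^0\cup W^1\cup W^3$ or neighbors thereof; (3) rule out every placement in which fewer than four witness sets are occupied, and more generally every placement other than $t,t'\in W^0$, $f,f'\in W^2$ (up to the symmetry swapping $W^1\leftrightarrow W^3$); (4) conclude all six pairs are separated. The main obstacle I anticipate is step (3): the bipartiteness means $t$ could in principle share a witness set with $f'$ or $t'$ via adjacency, so I must carefully enumerate how $\{t,f,t',f'\}$ can meet $W^0,\dots,W^3$ and show every alternative forces an empty witness set or a missing/extra edge between consecutive $W^i$'s — exactly the flavor of the case analysis in Lemma~\ref{lemma:c5-ws}, but I should exploit (P1)/(P2) aggressively to keep the number of cases small (the $D,D'$ vertices act as ``glue'' forcing $t$ and $f'$ far apart). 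A subtlety to watch: I must make sure I only use that $\calW$ is \emph{nice}, so that Property (P1) is available for all pairs in $\calD$ — in particular for $\{t,f\}$ and $\{t',f'\}$ — rather than re-deriving it.
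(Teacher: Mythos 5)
Your proposal drives toward a configuration that contradicts the very statement you are trying to prove. You conclude that $t'\in W^0$ (the same witness set as $t$) and $f\in W^2$ (the same witness set as $f'$), and then assert that ``all six pairs are in distinct sets.'' But that placement puts $t$ and $t'$ in the \emph{same} witness set, and likewise $f$ and $f'$ --- exactly what the lemma forbids. The configuration that a nice witness structure actually forces (and the one realized in Lemma~\ref{lemma:Pos-NAE-SAT-Constrained-C4-Contraction-forward}) is $t\in W^0$, $t'\in W^1$, $f'\in W^2$, $f\in W^3$: the edge $tt'$ places $t'$ in a witness set adjacent to, not equal to, the one containing $t$, and Property (P1) on $\{t',f'\}$ is then satisfied by $t'\in W^1$, $f'\in W^2$. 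So your step (3), which aims to rule out ``every placement other than $t,t'\in W^0$, $f,f'\in W^2$,'' targets a false endpoint, and step (4) is a non sequitur. (Also, your early claim that $f'$ must land in the witness set opposite to $t$'s is itself a consequence of the full lemma rather than something you can assume at the outset.)

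Independently of that, the outline never touches the parts of the construction that handle the genuinely hard pairs. Separating $t$ from $f'$ is easy ($\{t,f'\}$ dominates $G$), and separating $t$ from $t'$ does follow from (P1)/(P2) --- but only if you invoke (P1) on the clause pairs $\{t,c_j\},\{t,b_j\}\in\calD$, which you never mention, to empty out the opposite witness set. Separating $t'$ from $f$ requires observing that the opposite witness set would then be a singleton inside the independent set $C\cup B\cup C'\cup B'$ and exploiting $N(c_j)\cap(X'\cup\{f'\})=N(b_j)\cap(X'\cup\{f'\})$ to break connectivity; and separating $t$ from $f$ is precisely why the vertices $y'_i,z'_i$ exist, since one must derive a contradiction when the opposite witness set is $\{x_i\}$. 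None of these ingredients appear in your plan, so even after correcting the target configuration, the two hardest sub-arguments are missing.
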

\begin{proof}
We first show that $t$, $f'$ are in different witness sets. Assume, for the sake of contradiction, that $t$ and $f'$ are in the same witness set, say $W^0$. By construction, every vertex in $V(G) \setminus \{t, f'\}$ is adjacent to either $t$ or $f'$. That is, $\{t,f'\}$ is a dominating set in $G$ implying that $W^2$ is empty since every (non-empty) witness set containing any vertex in $V(G) \setminus \{t,f'\}$ is adjacent to $W^0$. Hence, we conclude that $t$ and $f'$ are in different witness sets.  

Next, we prove that $t', f$ are in different witness sets. Assume, for the sake of contradiction, that $t', f$ are in the same witness set, say $W^0$. As $W^2$ is not adjacent to $W^0$, this implies that $W^2 \subseteq C \cup B \cup C' \cup B' \cup D \cup D'$. By Property (P2) of a nice $C_4$-witness structure of $G$, we have $W^2 \subseteq C \cup B \cup C' \cup B'$. As $C \cup B \cup C' \cup B'$ is an independent set and $W^2$ is a connected set, it follows that $W^2$ is a singleton set. Suppose $W^2 = \{c_j\}$ for some $j \in [M]$. By construction, $N(c_j) \subseteq \{f'\} \cup X' \cup D'$. As every vertex in $X' \cup \{f'\}$ is adjacent to $f \in W^0$, we have $X' \cup \{f'\} \subseteq W^1 \cup W^3$. Once again by Property (P2) of a nice $C_4$-witness structure of $G$, all the vertices in $D$ are in the same witness set that contains $f'$. As $W^1$ and $W^3$ are adjacent to $W^2$, we have $(D' \cup X' \cup \{f'\}) \cap W^1 \neq \emptyset$ and $(D' \cup X' \cup \{f'\}) \cap W^3 \neq \emptyset$. By construction, $N(c_j) \cap (X' \cup \{f'\}) = N(b_j) \cap (X' \cup \{f'\})$. This implies that $b_j$ is in $W^0$ as $W^2 = \{c_j\}$. By our assumption, $W^0 \setminus \{b_j\}$ contains $t', f$ and hence is non-empty. However, as $N(b_j) \subseteq W^1 \cup W^3$, $b_j$ is not adjacent with any other vertex in $W^0$. This contradicts the fact that $W^0$ is a connected set. A similar argument holds if $W^2 = \{c'_j\}$ or $W^2 = \{b_j\}$ or $W^2 = \{b'_j\}$. Thus, $t', f$ are in different witness sets. 

Now, we show that $t, t'$ are in different witness sets and $f, f'$ are in different witness sets. Assume, for the sake of contradiction, that $t$ and $t'$ are in the same witness set, say $W^0$. As $W^2$ is not adjacent with $W^0$, this implies that $W^2 \subseteq \{f, f'\} \cup C \cup B \cup D$. Recall that the pairs $\{t, f\}$, $\{t', f'\}$, $\{t, c_j\}$ and $\{t, b_j\}$ are in $\calD$ for every $j \in [M]$. As $t, t' \in W^0$, by Property (P1) of a nice $C_4$-witness structure of $G$, we have that $f, f', c_j, b_j \not\in W^2$ for any $j \in [M]$. By Property (P2), we have $(D \cup D') \cap W^2 = \emptyset$. This contradicts the fact that $W^2$ is a nonempty set. Hence, our assumption is wrong, and $t, t'$ are indeed in different set. Using a similar argument, it is easy to see that $f, f'$ are in different witness sets as well. 

Finally, we show that $t, f$ are in different witness sets and $t', f'$ are in different witness sets. Assume, for the sake of contradiction, that $t, f$ are in the same witness set, say $W^0$. As $W^2$ is not adjacent to $W^0$, we have $W^2 \subseteq X \cup Y \cup Z \cup C \cup B \cup D$.
As $f'$ is adjacent to $f$, we have $f' \in W^1 \cup W^3$, As $\{t, f'\} \cap W^2 = \emptyset$, by Property (P2) of a nice $C_4$-witness structure of $G$, we have $D \cap W^2 = \emptyset$. Hence, $W^2 \subseteq X \cup Y \cup Z \cup C \cup B$. Recall that for any $j \in [M]$, the pair $\{t, c_j\}$ is in $\calD$. As $t \in W^0$, by Property (P1) of a nice $C_4$-witness structure of $G$, we get that $c_j \not\in W^2$ for any $j \in [M]$. Using symmetric arguments, we can conclude $b_j \not\in W^2$ for any $j \in [M]$ as well. This implies that $W^2 \subseteq X \cup Y \cup Z$. By construction, $X \cup Y \cup Z$ is an independent set in $G$. As $W^2$ is a connected set, it follows that it is a singleton witness set. Suppose $W^2 = \{x_i\}$ for some $i \in [N]$. Recall that $t', f'$ are adjacent to $t$ and $f$, respectively, and are adjacent to $x_i$. As $t, f \in W^0$ and $x_i \in W^2$, we have $t', f' \in W^1 \cup W^3$. However, the pair $\{t', f'\}$ is in $\calD$. By Property (P1) of a nice $C_4$-witness structure of $G$, either $\{t', f'\} \subseteq W^1$ or $\{t', f'\} \subseteq W^3$. Without loss of generality, suppose $\{t', f'\}  \subseteq W^3$. Recall that $N(y'_i) = N(z'_i) = \{t, f, x_i\}$. Then, $t, f \in W^0$ and $\{x_i\} = W^2$ imply that $y'_i, z'_i \in W^1 \cup W^3$. As $N(y'_i) \subseteq W^0 \cup W^2$, if $y'_i \in W^1$, then $W^1 = \{y'_i\}$. A similar statement holds for $z'_i$. This implies that either $y'_i$ or $z'_i$ is present in $W^3$. Suppose $z'_i \in W^3$.
Then, $W^3 \setminus \{z'_i\}$ contains $t', f'$, and hence is non-empty. As $N(z'_i) \subseteq W^0 \cup W^2$, $z'_i$ is not adjacent with any other vertex in $W^3$. This contradicts the fact that $W^3$ is connected. Hence, our assumption is wrong and $t, f$ are in different witness sets. A similar argument holds if $W^2 = \{y_i\}$ or $W^2 = \{z_i\}$. Using a symmetric argument, it follows that $t', f'$ are in different witness sets. 
\end{proof}

\subsection{Equivalence of $G$ and $\psi$}

Now, we are ready to establish the equivalence of $\psi$ and $G$.

\begin{restatable}{lemma}{satfour}
\label{lemma:Pos-NAE-SAT-Constrained-C4-Contraction-forward}
If $\psi$ is a \yes-instance of \textsc{Positive NAE-SAT} then $G$ is a \yes-instance of \linebreak \textsc{$C_4$-Contractibility}.
\end{restatable}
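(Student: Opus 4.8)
The plan is to convert a not-all-equal satisfying assignment $\pi$ of $\psi$ directly into a $C_4$-witness structure $\calW=\{W^0,W^1,W^2,W^3\}$ of $G$ with $E(W^i,W^j)\neq\emptyset$ if and only if $j=(i+1)\bmod 4$. I would place the four base vertices as $t\in W^0$, $t'\in W^1$, $f'\in W^2$, $f\in W^3$; put the gadget vertices as suggested by the notion of a nice witness structure, i.e.\ $D'\subseteq W^0$ (alongside $t$) and $D\subseteq W^2$ (alongside $f'$); route the variable vertices by the assignment, sending $x'_i,y'_i,z'_i\to W^0$ and $x_i,y_i,z_i\to W^1$ when $\pi(X_i)=\true$, and $x'_i,y'_i,z'_i\to W^3$ and $x_i,y_i,z_i\to W^2$ when $\pi(X_i)=\false$; and finally put the clause vertices on the two ``$f$-sides'', $C\cup B\subseteq W^3$ and $C'\cup B'\subseteq W^1$. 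Explicitly,
\begin{align*}
W^0 &= \{t\}\cup D'\cup\{x'_i,y'_i,z'_i \mid \pi(X_i)=\true\},\\
W^1 &= \{t'\}\cup C'\cup B'\cup\{x_i,y_i,z_i \mid \pi(X_i)=\true\},\\
W^2 &= \{f'\}\cup D\cup\{x_i,y_i,z_i \mid \pi(X_i)=\false\},\\
W^3 &= \{f\}\cup C\cup B\cup\{x'_i,y'_i,z'_i \mid \pi(X_i)=\false\}.
\end{align*}

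I would then verify the three defining requirements in turn. That $\{W^0,W^1,W^2,W^3\}$ is a partition of $V(G)$ into four nonempty sets is a direct bookkeeping check against the vertex sets introduced in the construction. For connectedness, $W^0$ and $W^2$ induce stars: every vertex of $W^0$ is adjacent to $t$ (the $D'$-vertices and the relevant $x'_i,y'_i,z'_i$ are, by construction, all adjacent to $t$), and symmetrically every vertex of $W^2$ is adjacent to $f'$. In $W^1$, the vertex $t'$ dominates all the variable vertices placed there, so only the clause vertices $c'_j,b'_j$ need to be attached; this is the one point where the not-all-equal property enters, since each clause $C_j$ contains some variable $X_i$ with $\pi(X_i)=\true$, whence $x_i\in W^1$ is adjacent to both $c'_j$ and $b'_j$. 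The argument for $W^3$ is symmetric, using that each clause also contains a variable set to \false.

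Finally I would check the adjacency pattern. The four ``cyclic'' adjacencies are witnessed by concrete edges: $tt'$ for $E(W^0,W^1)$, $ff'$ for $E(W^2,W^3)$, the edge from $d_{t',f',1}\in W^2$ to $t'\in W^1$ for $E(W^1,W^2)$, and the edge from $d'_{t,f,1}\in W^0$ to $f\in W^3$ for $E(W^3,W^0)$. The substance of the proof is showing that the two diagonal pairs are empty, $E(W^0,W^2)=\emptyset$ and $E(W^1,W^3)=\emptyset$: here I would walk through the (short) neighbourhood lists of the vertices placed in $W^0$ and check that none of their neighbours is placed in $W^2$, and symmetrically for $W^1$ versus $W^3$. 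This last step is the main obstacle, because the clause--variable edges $c_jx'_i,b_jx'_i,x_ic'_j,x_ib'_j$ and the $\calD$-gadget edges all have to be accounted for; in fact it is precisely to avoid these diagonals that the clause vertices must be pushed to the side of $f$ (for $c_j,b_j$) and of $t'$ (for $c'_j,b'_j$) rather than to the witness sets of their base neighbours $f'$ and $t$. Once the two diagonal pairs are seen to be empty, $\calW$ is a $C_4$-witness structure of $G$, so $G$ is a \yes-instance.
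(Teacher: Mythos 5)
Your proposal is correct and takes essentially the same route as the paper: the same four witness sets, the same star-shaped connectivity argument for $W^0$ and $W^2$, and the same use of the not-all-equal property to attach $c'_j,b'_j$ to $W^1$ and $c_j,b_j$ to $W^3$. The only cosmetic difference is that you witness $E(W^1,W^2)$ and $E(W^0,W^3)$ via vertices of $D$ and $D'$, whereas the paper uses the edges $x_it'$ and $x'_if$; both are valid.
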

\begin{proof}
Suppose $\pi : \{X_1, X_2, \dots, X_N\} \mapsto \{\true, \false\}$ is a not-all-equal satisfying assignment of $\psi$. Define the following partition of $V(G)$.
\begin{align*}
W^0 &\ := \{t\} \cup \{x'_i, y'_i, z'_i \mid i \in [N]\ \text{and}\ \pi(X_i) = \true\}\ \cup D', \\
W^1 &\ := \{t'\} \cup \{x_i, y_i, z_i \mid i \in [N]\ \text{and}\ \pi(X_i) = \true\} \cup B' \cup C', \\
W^2 &\ := \{f'\} \cup \{x_i, y_i, z_i \mid i \in [N]\ \text{and}\ \pi(X_i) = \false\} \cup D,  \ \text{and}\\
W^3 &\ := \{f\} \cup \{x'_i, y'_i, z'_i \mid i \in [N]\ \text{and}\ \pi(X_i) = \false\} \cup B \cup C.
\end{align*}
As $t$ is adjacent to every vertex in $X' \cup Y' \cup Z' \cup D'$, and $f'$ is adjacent to every vertex in $X \cup Y \cup Z \cup D$, $W^0$ and $W^2$ are connected sets in $G$. Further, by construction, $E(W^0,W^2)= \emptyset$ and $E(W^1,W^3)= \emptyset$. $W^1$ is a connected set since $X \cup Y \cup Z \subseteq N(t')$ and for each $j \in [M]$, there exists $i \in [N]$ such that $x_i \in W^1$ (corresponding to a variable in $C_j$ set to \true) and $c'_jx_i, b'_jx_i \in E(G)$. Similarly, $W^3$ is also a connected set. The edges $tt'$ and $ff'$, respectively, ensure that $W^0$ is adjacent to $W^1$ and $W^3$ is adjacent to $W^2$. As for any $i \in [N]$, $x'_i$ is adjacent with $t$ and $f$ and $x'_i \in W^0 \cup W^3$, it follows that $W^0$ and $W^3$ are adjacent. Similarly, $W^1$ and $W^2$ are adjacent. Hence, $\{W^0, W^1, W^2, W^3\}$ is a $C_4$-witness structure.
\end{proof}

Now, we proceed to show the converse of Lemma \ref{lemma:Pos-NAE-SAT-Constrained-C4-Contraction-forward}. We crucially use the properties of a nice $C_4$-witness structure. This structure along with certain other properties help to obtain a not-all-equal satisfying assignment of $\psi$.

\begin{lemma}
\label{lemma:Pos-NAE-SAT-Constrained-C4-Contraction-backward}
If $G$ is a \yes-instance of \textsc{$C_4$-Contractibility} then $\psi$ is a \yes-instance of \textsc{Positive NAE-SAT}.
\end{lemma}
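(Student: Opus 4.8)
The plan is to mirror the structure of the backward direction for $C_5$ (Lemma~\ref{lemma:Pos-NAE-SAT-C5-Contractibility-Part-1-backward}), but now routing everything through a \emph{nice} $C_4$-witness structure. First I would invoke Lemma~\ref{lemma:nice-c4-ws} to obtain a nice $C_4$-witness structure $\calW=\{W^0,W^1,W^2,W^3\}$ of $G$, with $E(W^i,W^j)\neq\emptyset$ iff $j=(i\pm 1)\bmod 4$. By Lemma~\ref{lemma:nicer-c4-ws}, the four vertices $t,t',f,f'$ lie in four distinct witness sets; up to renaming, I would fix $t\in W^0$, $t'\in W^1$, $f'\in W^2$, $f\in W^3$ (this ordering is consistent with the $C_4$ adjacency, using edges $tt'$ and $ff'$ together with the fact that $t,f$ are non-adjacent and $t',f'$ are non-adjacent). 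By Property (P2), every vertex of $D'$ lies in $W^0$ and every vertex of $D$ lies in $W^2$; by (P1) and the pairs $\{t,c_j\},\{t,b_j\}\in\calD$ we get $c_j,b_j\notin W^2$ for all $j$, and similarly from $\{f',c'_j\},\{f',b'_j\}\in\calD$ we get $c'_j,b'_j\notin W^0$.

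\textbf{Locating the variable gadget.} The next step is to show that for each $i\in[N]$ the ``primed'' gadget vertices $x'_i,y'_i,z'_i$ all lie in $W^0\cup W^3$ and the ``unprimed'' ones $x_i,y_i,z_i$ all lie in $W^1\cup W^2$. Indeed $N(x'_i)\supseteq\{t,f\}$ with $t\in W^0$, $f\in W^3$, and $x'_i$ is non-adjacent to $t'\in W^1$ and to $f'\in W^2$ (the bipartition forbids these edges), so $x'_i\in W^0\cup W^3$; the same holds for $y'_i,z'_i$ since $N(y'_i)=N(z'_i)=\{t,f,x_i\}\subseteq W^0\cup W^3\cup(W^1\cup W^2)$ forces them into $W^0\cup W^3$ once we know $x_i\in W^1\cup W^2$. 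Symmetrically, $x_i,y_i,z_i$ are adjacent to $t'\in W^1$ and $f'\in W^2$ but not to $t,f$, hence lie in $W^1\cup W^2$. Then I would argue the ``coherence'' statement: $x'_i\in W^0 \iff x_i\in W^1$ (and $x'_i\in W^3\iff x_i\in W^2$). If, say, $x'_i\in W^0$ but $x_i\in W^2$, then the edge $x_ix'_i$ produces an edge between $W^0$ and $W^2$, a contradiction; and $x'_i\in W^3$ with $x_i\in W^1$ likewise gives an edge $W^1W^3$. So the single binary choice ``$x'_i\in W^0$ or $x'_i\in W^3$'' governs the whole gadget, and I would define $\pi(X_i)=\true$ in the former case and $\false$ in the latter. (The sets $Y,Z,Y',Z'$ play no role beyond confirming connectivity, matching the remark in the construction that they are ``added for technical reasons.'')

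\textbf{Checking the NAE condition.} Finally, fix a clause $C_j$. Since $c_j\notin W^2$ and $c_j$ is adjacent to $W^0$-member... more precisely, from (P1) and $\{t,c_j\}\in\calD$ together with $t\in W^0$ we get $c_j\in W^0\cup W^1\cup W^3$; but $N(c_j)\subseteq\{f'\}\cup X'\cup D'$ and $f'\in W^2$, $D'\subseteq W^0$, so the neighbours of $c_j$ lie in $W^0\cup W^2\cup (\text{some }x'_i)$, forcing $c_j\in W^0$ (it must be adjacent to $W^0$ via $D'$ and cannot be adjacent to $W^3$). Hence $c_j\in W^0$, and since $W^0$ is connected and the only neighbours of $c_j$ inside $W^0$ are vertices $x'_i$ with $X_i\in C_j$ and $x'_i\in W^0$, there is a variable $X_i$ of $C_j$ with $x'_i\in W^0$, i.e. $\pi(X_i)=\true$. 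By the same reasoning applied to $b_j$ (whose neighbourhood is $\{f'\}\cup X'\cup D'$ as well, but with $D'$ replaced by the corresponding $D'$-block, all in $W^0$), actually I would instead use $b'_j$ or $c'_j$: from $\{f',c'_j\}\in\calD$, $f'\in W^2$, and $N(c'_j)\subseteq\{t\}\cup X\cup D$ with $t\in W^0$, $D\subseteq W^2$, we get $c'_j\in W^2$, and connectivity of $W^2$ forces some variable $X_{i'}$ of $C_j$ with $x_{i'}\in W^2$, i.e. $\pi(X_{i'})=\false$. Thus every clause has a true and a false variable, so $\pi$ is a not-all-equal satisfying assignment.

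\textbf{Main obstacle.} The delicate point is the last paragraph: one must be careful that the clause vertices $c_j,b_j$ (in $V$) and $c'_j,b'_j$ (in $V'$) are each pinned to the \emph{correct} witness set ($W^0$ resp.\ $W^2$) rather than merely to ``$W^0$ or $W^3$''. This requires combining (P1) for the $\calD$-pairs $\{t,c_j\}$ etc., (P2) to place the $D$/$D'$ blocks, and the bipartite structure (to rule out spurious edges), exactly as the analogous argument for $H$ in the $C_5$ case used ``$c_j$ is in the same witness set as $\alpha^1$.'' Everything else is a direct translation of the $C_5$ argument with $C_4$ adjacencies.
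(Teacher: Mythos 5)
Your setup and the variable-gadget analysis are correct and match the paper: you invoke Lemmas~\ref{lemma:nice-c4-ws} and \ref{lemma:nicer-c4-ws}, fix $t\in W^0$, $t'\in W^1$, $f'\in W^2$, $f\in W^3$, deduce $X'\subseteq W^0\cup W^3$ and $X\subseteq W^1\cup W^2$, and establish the coherence $x'_i\in W^0\iff x_i\in W^1$. The problem is in your final paragraph, and it is a genuine error, not just a presentational one: you place the clause vertices in the wrong witness sets. You claim $c_j\in W^0$, but $c_j$ is adjacent to $f'\in W^2$ by construction, and $W^0$ and $W^2$ are non-adjacent in the $C_4$, so $c_j\notin W^0$. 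Symmetrically, you claim $c'_j\in W^2$, but $c'_j$ is adjacent to $t\in W^0$, so $c'_j\notin W^2$. The correct placements (which the paper derives) are $c_j,b_j\in W^3$ (the witness set of $f$) and $c'_j,b'_j\in W^1$ (the witness set of $t'$): e.g.\ for $c_j$, the edge $c_jf'$ rules out $W^0$, Property (P1) with $\{t,c_j\}\in\calD$ rules out $W^2$, and $c_j\in W^1$ would disconnect $W^1$ since $N(c_j)\subseteq\{f'\}\cup X'\cup D'\subseteq W^0\cup W^2\cup W^3$ while $t'\in W^1$.

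This mislocation also breaks the connectivity argument that extracts the NAE property. Even if $c_j$ could sit in $W^0$, it has three $D'$-neighbours $d'_{t,c_j,p}\in W^0$, so connectivity of $W^0$ would \emph{not} force $c_j$ to have an $X'$-neighbour there; the same objection applies to $c'_j$ in $W^2$ via its $D$-neighbours. The argument only works because in $W^3$ the vertex $c_j$ has no possible neighbours other than $X'$-vertices (its neighbours $f'$ and $D'$ live in $W^2$ and $W^0$ respectively), so connectedness of $W^3$ forces some $x'_i\in W^3$ with $X_i$ in $C_j$, i.e.\ a \false\ variable; symmetrically $c'_j\in W^1$ forces some $x_i\in W^1$, i.e.\ a \true\ variable. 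You correctly identified this as the delicate point, but the execution as written would not survive being checked against the construction.
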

\begin{proof}
Suppose $\calW = \{W^0, W^1, W^2, W^3\}$ is a $C_4$-witness structure of $G$ where $E(W^i, W^j) \neq \emptyset$ if and only if $j=(i\pm 1) \mod 4$. From Lemmas \ref{lemma:nice-c4-ws} and \ref{lemma:nicer-c4-ws}, we may assume that $\calW$ is a nice $C_4$-witness structure in which every pair of vertices in $\{t,t',f,f'\}$ are in different witness sets. As $\{t, f\}$ and $\{t', f'\}$ are in $\calD$, by Property (P1) of a nice $C_4$-witness structure of $G$, $t$ and $f$ are in adjacent witness sets and $t'$ and $f'$ are in adjacent witness sets. Hence, without loss of generality, we may assume that $t \in W^0$, $t' \in W^1$, $f' \in W^2$, and $f \in W^3$. Also, by Property (P2) of a nice $C_4$-witness structure of $G$, we have $D' \subseteq W^0$ and $D \subseteq W^2$. 

For each $i \in [N]$, $x'_i$ is adjacent to $t, f$ and $x_i$ is adjacent to $t', f'$. Therefore, $x_i \notin W^0 \cup W^3$, $x'_i \notin W^1 \cup W^2$ and we have $X' \subseteq W^0 \cup W^3$ and $X \subseteq W^1 \cup W^2$.
Further, since $x_i x'_i \in E(G)$, it follows that $x_i \in W^1$ if and only if $x'_i \in W^0$ and $x_i \in W^2$ if and only if $x'_i \in W^3$. Refer to Figure~\ref{fig:c4-contractiblity-bipartite-np-hard} for an illustration where these two choices are indicated by the purple (dotted) edges and green (dashed) edges. We will associate these two choices with setting the variable $X_i$ to $\true$ or $\false$, respectively. Consider a vertex $c_j \in C$ for some $j \in [M]$. As $f' \in W^2$ and $f'c_j \in E(G)$, it follows that $c_j\notin W^0$. Also, since $t \in W^0$ and $\{t, c_j\}$ is in $\calD$, by Property (P1) of a nice $C_4$-witness structure of $G$, it follows that $c_j$ is not in $W^2$. As $N(c_j) \subseteq W^0 \cup W^2 \cup W^3$ and $t' \in W^1$, if $c_j \in W^1$, then $W^1$ cannot be a connected set. Hence, $c_j \in W^3$. As $c_j$ is an arbitrary vertex of $C$ in this reasoning, we have $C \subseteq W^3$. Similarly, $B \subseteq W^3$. This implies $C \cup B \subseteq W^3$. By a symmetric argument, we have $C' \cup B' \subseteq W^1$.

We now construct an assignment $\pi: \{X_1, X_2, \dots, X_N\} \mapsto \{\true, \false\}$ using $\calW$. For every $i \in [N]$, set $\pi(X_i) = \true$ if $x_i \in W^1$ (or equivalently $x'_i \in W^0$) and set $\pi(X_i) = \false$ if $x'_i \in W^3$ (or equivalently $x_i \in W^2$).  As mentioned before, $x_i \in W^1$ if and only if $x'_i \in W^0$ and $x'_i \in W^3$ if and only if $x_i \in W^2$. As $W^3$ is connected and $f,c_j \in W^3$, for every $j \in [M]$, there exists $i \in [N]$, such that $x'_i \in W^3$ and $c_jx_i \in E(G)$. Similarly, as $W^1$ is connected, for every $j \in [M]$, there exists $i \in [N]$ such that $x_i \in W^1$ and $c'_jx_i \in E(G)$.  
\end{proof}

\section{Conclusion and Future Directions}
\label{sec:future-direc}
In this work, we showed that \textsc{$C_\ell$-Contractibility} is \NP-complete\ on bipartite graphs for $\ell \in \{4,5\}$ by giving polynomial-time reductions from \textsc{Positive NAE-SAT}. \textsc{Positive NAE-SAT} (or equivalently, \textsc{Hypergraph $2$-Colorability}) has been one of the canonical \NP-complete\ problems in many intractability results on \textsc{$C_\ell$-Contractibility} (\cite{BrouwerV87,DabrowskiP17-IPL,FialaKP13}). 
In general, in most contraction problems, it is a non-trivial task to forbid certain edges from being contracted in a solution.
The simultaneous property of requiring a variable to be \true\ and a variable to be \false\ in every clause of a \yes-instance\ of \textsc{Positive NAE-SAT} helps to encode that certain edges in the output graph of the reduction  cannot be contracted, hence, giving a handle on the required structure of the witness sets.
This is one of the reasons that makes \textsc{Positive NAE-SAT} an amenable choice in many reductions for graph contractibility problems.
However, the sophistication level of the gadgets involved in the reduction increases with the restriction required on the input graph (eg. bipartite graphs, claw-free graphs).
In contrast, the sophistication decreases with increase in the size of the target graph, for instance, the gadgets required for the \NP-hardness\ of \textsc{$C_4$-Contractibility} are more complex than those needed for \textsc{$C_5$-Contractibility}, which are more complex that what are required for \textsc{$C_6$-Contractibility}. 

Continuing along the direction of solving cycle contractibility in restricted graph classes, we can also show the following result.

\begin{restatable}{theorem}{cfourkfourfree}
\label{thm:c4-con-k4-free}
$C_4$-\textsc{Contractibility} is \NP-complete\ on $K_4$-free graphs of diameter 2.
\end{restatable}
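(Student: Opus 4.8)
The plan is to prove membership in \NP\ directly and \NP-hardness by a polynomial-time reduction from \textsc{Positive NAE-SAT} whose output graph is $K_4$-free and has diameter $2$. Containment in \NP\ is immediate: guess a partition of the vertex set into four classes and verify in polynomial time that it is a $C_4$-witness structure. For hardness, the natural starting point is the bipartite graph $G$ built in Section~\ref{sec:c4-contractiblity-bipartite} for Theorem~\ref{thm:c-4-contractiblity-np-hard}: it is triangle-free, hence already $K_4$-free, and it contracts to $C_4$ exactly when $\psi$ is a \yes-instance. Thus the only property left to engineer is the diameter, and the entire difficulty is to introduce adjacencies that bring the diameter down to $2$ without (i) creating a $K_4$ and (ii) changing whether the graph contracts to $C_4$.

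The device for handling (ii) is to add only \emph{safe} objects: a new edge or auxiliary vertex is safe if, in every witness structure of the canonical shape used in Lemma~\ref{lemma:Pos-NAE-SAT-Constrained-C4-Contraction-forward}, its endpoints lie in the same or in adjacent witness sets (and, for an auxiliary vertex, its witness set can be chosen, as a function of the truth assignment, adjacent or equal to the witness sets of all its neighbours). Adding only safe objects keeps the forward direction intact: the witness structure of Lemma~\ref{lemma:Pos-NAE-SAT-Constrained-C4-Contraction-forward}, extended by placing the auxiliary vertices appropriately, remains a valid $C_4$-witness structure. One also takes care to leave the degree-two gadget vertices $D\cup D'$ untouched, so that the ``nice witness structure'' arguments of Lemmas~\ref{lemma:nice-c4-ws} and~\ref{lemma:nicer-c4-ws} transfer to the augmented graph with only cosmetic changes; the backward direction then follows as in Lemma~\ref{lemma:Pos-NAE-SAT-Constrained-C4-Contraction-backward}, the extra safe edges merely reinforcing its conclusions. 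Requirement (i) is discharged by a finite case analysis: group the vertices by which witness set(s) each can occupy, arrange the pattern of safe edges so that no group, and no union of two ``compatible'' groups, is a clique on four or more vertices, and check that no four vertices drawn across these groups are pairwise adjacent.

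The main obstacle is that (i) and (ii) are in genuine tension, and a mere local patching of $G$ is provably insufficient. Every two vertices on the same side of the bipartition of $G$ already share a common neighbour (typically $t'$ or $f'$), so the problematic pairs are non-adjacent ``cross'' pairs; but for some of these — for instance a variable-gadget vertex that may lie in $W^1$ or $W^2$ together with a variable-gadget vertex of a \emph{different} variable that may lie in $W^0$ or $W^3$ — no safe edge, and in fact no safe auxiliary vertex of bounded degree, can create a common neighbour, because forcing the corresponding witness sets to be equal or adjacent would force the two variables to receive the same truth value (and imposing this for all pairs would collapse every assignment to a constant one, which is never not-all-equal satisfying). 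Overcoming this requires redesigning the base ``square'' and the variable/clause gadgets so that vertices of distinct variable gadgets always land in equal or adjacent witness sets, independently of the assignment, and then re-running the whole correctness analysis (forward direction, the two ``nice witness structure'' lemmas, and the backward direction) together with the $K_4$-free case check for the new construction. A secondary, milder obstacle is ruling out spurious $C_4$-witness structures of the new graph that do not arise from not-all-equal satisfying assignments of $\psi$; this is again controlled by the untouched degree-two gadgets and by the fact that safe edges keep any witness structure close to the shape analysed in the bipartite case.
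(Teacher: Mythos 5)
There is a genuine gap: your proposal correctly diagnoses why the obvious approach fails, but it stops exactly where the proof has to begin. You observe that augmenting the bipartite graph $G$ of Section~\ref{sec:c4-contractiblity-bipartite} with ``safe'' edges cannot bring the diameter down to $2$ (the cross pairs between distinct variable gadgets admit no safe common neighbour without collapsing the assignment), and you then conclude that one must ``redesign the base square and the variable/clause gadgets'' and ``re-run the whole correctness analysis.'' That redesign and that analysis \emph{are} the content of the theorem; since neither the new construction nor its correctness argument is supplied, nothing has been proved. A reviewer cannot verify a reduction that is only promised.

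For comparison, the paper does not patch the bipartite construction at all; it builds a fresh and in fact simpler graph whose vertex set is partitioned into three independent sets $X$, $S^+=\{t,t'\}\cup\{c_j^+,b_j^+ \mid j\in[M]\}$ and $S^-=\{f,f'\}\cup\{c_j^-,b_j^- \mid j\in[M]\}$, with $S^+$ completely joined to $S^-$, every $x_i$ adjacent to $t$ and $f$, and $x_i$ adjacent to the four clause vertices of each clause containing $X_i$. The two global properties you struggled to engineer then come for free: a $3$-partition into independent sets rules out $K_4$, and the complete join between $S^+$ and $S^-$ (together with $t,f$ dominating $X$) gives diameter $2$. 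The heavy machinery of the bipartite reduction (the sets $Y,Z,Y',Z'$ and the degree-two gadgets $D,D'$) is discarded entirely; its role is taken over by the vertices $t'$ and $f'$, which end up as the two singleton witness sets, while $S^+$ and $S^-$ minus these are forced into the two big witness sets together with the true and false variable vertices respectively. The backward direction is a short direct argument using the dominating pairs $\{t,f\}$, $\{t',f\}$, $\{t,f'\}$ and the twin structure $N(c_j^\pm)=N(b_j^\pm)$, rather than an adaptation of Lemmas~\ref{lemma:nice-c4-ws} and~\ref{lemma:nicer-c4-ws}. So even at the level of strategy, ``reuse the Section~\ref{sec:c4-contractiblity-bipartite} graph and its nice-witness-structure lemmas'' is not the route that works here.
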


We postpone the proof of the theorem in Subsection~\ref{sub-sec:c4-con-k4-free}.
In the subsection, we also argue that Theorem \ref{thm:c4-con-k4-free} can be generalized to show that $K_{p,q}$-\textsc{Contractibility} (the problem of determining if a graph is contractible to the complete bipartite graph with $p$ vertices in one part and $q$ vertices in the other part) is also \NP-complete\ for each $p,q \geq 2$ on $K_4$-free graphs of diameter $2$. 
Our interest in this restricted case stems from its relationship with \textsc{Disconnected Cut}, the problem of determining if a connected graph $G$ contains a subset $U \subseteq V(G)$ such that both $G[U]$ and $G-U$ are disconnected (\cite{ItoKPT11,MartinP15,MartinPL20}). \cite{ItoKPT11} showed that if the diameter of $G$ is $2$, then $G$ has a disconnected cut if and only if $G$ is contractible to $K_{p, q}$ for some $p, q \ge 2$. 
\cite{MartinPL20} proved that \textsc{Disconnected Cut} is polynomial-time solvable for $H$-free graphs when $H \neq K_4$ is a graph on at most $4$ vertices. 
Theorem~\ref{thm:c4-con-k4-free} (and its generalization to $p,q \geq 2$) implies that $(p, q)$-\textsc{Cut} (see \cite{ItoKPT11} for the definition) is \NP-complete\ for all $p, q \ge 2$ on $K_4$-free graphs.
Although this falls short of completing the dichotomy result of \cite{MartinPL20}, we believe that it strongly suggests that there is no polynomial-time algorithm for \textsc{Disconnected Cut} on $K_4$-free graphs.

Finally, determining the longest cycle to which an $H$-free graph (for a fixed $H$) is contractible is another interesting future direction. \cite{KernP20} studied $H$-free graphs in the context of longest paths.
Note that assuming \P$\neq$\NP, the complexities of contracting to a longest path and longest cycle do not coincide on $H$-free graphs.

\subsection{Proof of Theorem~\ref{thm:c4-con-k4-free}}
\label{sub-sec:c4-con-k4-free}

It is easy to verify that the problem is in \NP. To show \NP-hardness, once again we give a polynomial-time reduction from \textsc{Positive NAE-SAT}. Let $\{X_1, X_2, \dots, X_N\}$ and $\{C_1, C_2, \dots, C_M\}$ be the sets of variables and clauses, respectively, in an instance $\psi$ of \textsc{Positive NAE-SAT}. We construct a $K_4$-free diameter 2 graph $G$ with a partition $\{X,S^+,S^-\}$ of its vertex set as follows. 
\begin{enumerate}
\item Add vertices $t, t'$ to $S^+$ and vertices $f, f'$ to $S^-$. Let $A$ denote $\{t,t',f,f'\}$.
\item For every $i \in [N]$, we add a vertex $x_i$ to $X$ corresponding to variable $X_i$. 
\item For every $j \in [M]$, we add vertices $c^+_j, b^+_j$ to $S^+$ and $c^-_j, b^-_j$ to $S^-$ corresponding to clause $C_j$. 
\item Make every vertex in $S^+$ adjacent to every vertex in $S^-$.
\item Make every vertex in $X$ adjacent to both $t$ and $f$. 
\item For every $i \in [N]$ and $j \in [M]$ such that $X_i$ appears in $C_j$, add edges $x_i c^+_j$, $x_i b^+_j$, $x_i c^-_j$, and $x_i b^-_j$. 
\end{enumerate}

This completes the construction of $G$. See Figure \ref{fig:c4-con-k4-free} for an illustration. It is easy to verify that $G$ is $K_4$-free as the three sets $S^+$, $S^-$, $X$ that partition $V(G)$ are independent sets. Further, the diameter of $G$ is two as any pair of non-adjacent vertices have a common neighbor. We now prove the correctness of the reduction. Suppose $\pi : \{X_1, X_2, \dots, X_N\} \mapsto \{\true, \false\}$ is a not-all-equal satisfying assignment of $\psi$. Then, let $W^0 = \{t'\}$, $W^1 = \{f'\}$, $W^2 = \{t\} \cup \{x_i \mid i \in [N]\ \text{and}\ \pi(X_i) = \true\} \cup (S^+\setminus \{t'\})$ and $W^3 = \{t\} \cup \{x_i \mid i \in [N]\ \text{and}\ \pi(X_i) = \false\} \cup (S^-\setminus \{f'\})$. It is easy to verify that $\{ W^0, W^1, W^2, W^3 \}$ is a $C_4$-witness structure of $G$. 

\begin{figure}[t]
  \begin{center}
    \includegraphics[scale=1]{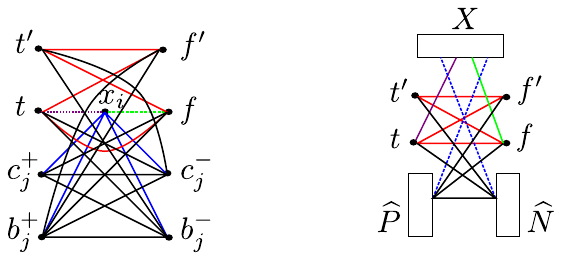}
    \end{center}
   \caption{(Left) The graph $G$ with different sets of edges highlighted. Here, purple (dotted) edges denote setting the variable to \true\ and the green (dashed) edges denote setting it to \false\ in a \yes-instance. (Right) Adjacency relation between subset of vertices where $\widehat{P}=S^+ \setminus \{t,t'\}$ and $\widehat{N}=S^- \setminus \{f,f'\}$. 
    \label{fig:c4-con-k4-free}}
\end{figure}

Conversely, suppose $\calW = \{W^0, W^1, W^2, W^3\}$ is a $C_4$-witness structure of $G$ where $E(W^i, W^j) \neq \emptyset$ if and only if $j=(i+1) \mod 4$. Then, we claim that every pair of vertices in $A$ are in different witness sets. Assume that the claim holds. Then, without loss of generality, we may assume that $t' \in W^0$, $f' \in W^1$, $t \in W^2$, and $f \in W^3$. Observe that as $f$ and $t$ are adjacent to every vertex in $X$, it follows that for each $i \in [N]$, $x_i \in W^2 \cup W^3$. We now construct an assignment $\pi: \{X_1, X_2, \dots, X_N\} \mapsto \{\true, \false\}$ using $\calW$. For every $i \in [N]$, set $\pi(X_i) = \true$ if $x_i \in W^2$ and set $\pi(X_i) = \false$ if $x_i \in W^3$. As every vertex in $S^+$ is adjacent to $f$ and $f'$, for every $j \in [M]$, we have $c^+_j,b^+_j \in W^0 \cup W^2$. Similarly, as every vertex in $S^-$ is adjacent to $t$ and $t'$, for every $j \in [M]$, we have $c^-_j,b^-_j \in W^1 \cup W^3$. If for some $j \in [M]$, $c^+_j \in W^0$, then as $W^0$ has no neighbors of  $c^+_j$, $W^0$ cannot be a connected set. Then, as $N(b^+_j)=N(c^+_j)$, we have $S^+ \subseteq W^2$. Similarly, we have $S^- \subseteq W^3$. Consider $j \in [M]$. As $t,c^+_j,b^+_j \in W^2$ and these three vertices form an independent set, it follows that there is an index $i \in [N]$ such that $x_i \in W^2$ satisfying $x_ic^+_j,x_ib^+_j \in E(G)$. Similarly, for each $j \in [M]$, there is an index $i \in [N]$ such that $x_i \in W^3$ satisfying $x_ic^-_j,x_ib^-_j \in E(G)$. It now follows that for any $j \in [M]$, $\pi$ sets at least one of the variables in clause $C_j$ to \true\ and at least one of variables in $C_j$ to \false. 

It now remains to show that in the $C_4$-witness structure $\calW$ of $G$, every pair of vertices in $A$ are in different witness sets. As $S_1=\{t,f\}$, $S_2=\{t',f\}$ and $S_3=\{t,f'\}$ are dominating sets in $G$, for each $i \in [3]$, the vertices in $S_i$ are in different witness sets. If for some $i$, the two vertices $S_i$ are in the same witness set $W^j$, then it follows that $W^{(j+2) \mod 4}$ is empty leading to a contradiction. Now suppose $f$ and $f'$ are in the same witness set, say $W^0$. Then, $W^2 \subseteq S^- \setminus \{f,f'\}$ as every other vertex is adjacent to $f$ or $f'$. However, as $S^- \setminus \{f,f'\}$ is an independent set, it follows that $W^2$ is a singleton set, say $\{c^-_j\}$. Then, $N(c^-_j) \cap W^1 \neq \emptyset$ and  $N(c^-_j) \cap W^3 \neq \emptyset$. As $N(c^-_j)=N(b^-_j)$, we have $b^-_j \in W^0$. However, $b^-_j, f,f' \in W^0$ implies that $W^0$ is not a connected set (as all neighbors of $b^-_j$ are in $W^1 \cup W^3$) leading to a contradiction. A similar argument holds if $W^2 = \{b^-_j\}$. A similar argument shows that $t$ and $t'$ cannot be in the same witness set as well. Finally, we show that $t'$ and $f'$ are in different witness sets. Assume on the contrary that $t'$ and $f'$ are in the same witness set, say $W^0$. Then, as $t$ and $f$ cannot be in $W^0$, we have $t,f \in W^1 \cup W^3$. However, as $t$ and $f$ are in different witness sets, it follows that $E(W^1,W^3)\neq \emptyset$ leading to a contradiction. \\

\noindent {\bf Remark:} Observe that the base cycle $(t',f',t,f)$ in the above construction may be viewed as a $K_{2,2}$. Theorem \ref{fig:c4-con-k4-free} can be generalized to show that $K_{p,q}$-\textsc{Contractibility} is \NP-complete\ for each $p,q \geq 2$ on $K_4$-free graphs of diameter $2$ by blowing up the graph $G$ as follows: add $p-2$ new vertices to $S^+$ and $q-2$ vertices to $S^-$ ensuring that every vertex in $S^+$ is adjacent to every vertex in $S^-$. Now, we can show that $\psi$ is a \yes-instance\ of \textsc{Positive NAE-SAT} if and only if $G$ is contractible to $K_{p,q}$.

\nocite{*}
\bibliographystyle{abbrvnat}
\bibliography{refs}
\label{sec:biblio}

\end{document}